\title[Scalable Koopman Reachability with Conformal Coverage Guarantees]{Scalable Data-Driven Reachability Analysis and Control via Koopman Operators with Conformal Coverage Guarantees}
\newtheorem{problem}{Problem}
\algrenewcommand\algorithmicindent{0.8em}
\newcommand{\myuline}[1]{%
  \uline{\phantom{#1}}%
  \llap{\contour{white}{#1}}%
}
\author{%
 \Name{Devesh Nath}$^{*1}$ \Email{dnath7@gatech.edu}\\
 \Name{Haoran Yin}$^{*1}$ \Email{hyin95@gatech.edu}\\
 \Name{Glen Chou}$^2$ \Email{chou@gatech.edu}\\
 \addr Georgia Institute of Technology. $^1$ School of ECE, $^2$ Schools of Cybersecurity \& Privacy \& Aerospace Eng.%
}
\begin{document}

\maketitle

\algrenewcommand\algorithmicrequire{\textbf{Input:}}
\algrenewcommand\algorithmicensure{\textbf{Output:}}
\algrenewcommand\algorithmiccomment[1]{\textcolor{blue}{//~#1}}

\newcommand{\infKoop}{\mathcal{K}}
\newcommand{\ka}{K_{A}}
\newcommand{\kb}{K_{B}}
\newcommand{\enc}{\phi}
\newcommand{\dec}{\psi}
\newcommand{\tnfl}{\Gamma}
\newcommand{\B}{\mathcal{B}}
\newcommand{\dz}{$\delta z_t$}
\newcommand{\zr}{z^{ref}_t}
\newcommand{\ur}{u^{ref}_t}
\newcommand{\uf}{u^{ff}_t}
\newcommand{\zref}{z^{\text{ref}}}
\newcommand{\uref}{u^{\text{ref}}}
\newcommand{\delfz}{\delta z}
\newcommand{\delfu}{\delta u}
\newcommand{\lqrmat}{G}
\newcommand{\xrefdistr}{\text{MP}}
\newcommand{\ncscore}{R}

\newcommand{\ctrlSpace}{\mathcal{U}}
\newcommand{\stateSpace}{\mathcal{X}}
\newcommand{\ctrlSpaceDim}{\mathbb{R}^m}
\newcommand{\stateSpaceDim}{\mathbb{R}^n}

\begin{abstract}
\label{abstract}
\looseness-1We propose a scalable reachability-based framework for probabilistic, data-driven safety verification of unknown nonlinear dynamics. We use Koopman theory with a neural network (NN) lifting function to learn an approximate linear representation of the dynamics and design linear controllers in this space to enable closed-loop tracking of a reference trajectory distribution. Closed-loop reachable sets are efficiently computed in the lifted space and mapped back to the original state space via NN verification tools. To capture model mismatch between the Koopman dynamics and the true system, we apply conformal prediction to produce statistically-valid error bounds that inflate the reachable sets to ensure the true trajectories are contained with a user-specified probability. These bounds generalize across references, enabling reuse without recomputation. Results on high-dimensional MuJoCo tasks (11D Hopper, 28D Swimmer) and 12D quadcopters show improved reachable set coverage rate, computational efficiency, and conservativeness over existing methods.
\end{abstract}

\begin{keywords}
  safety-critical control, reachability, Koopman operators, conformal prediction
\end{keywords}

\begin{figure}[ht]
    \centering
    \includegraphics[width=\linewidth]{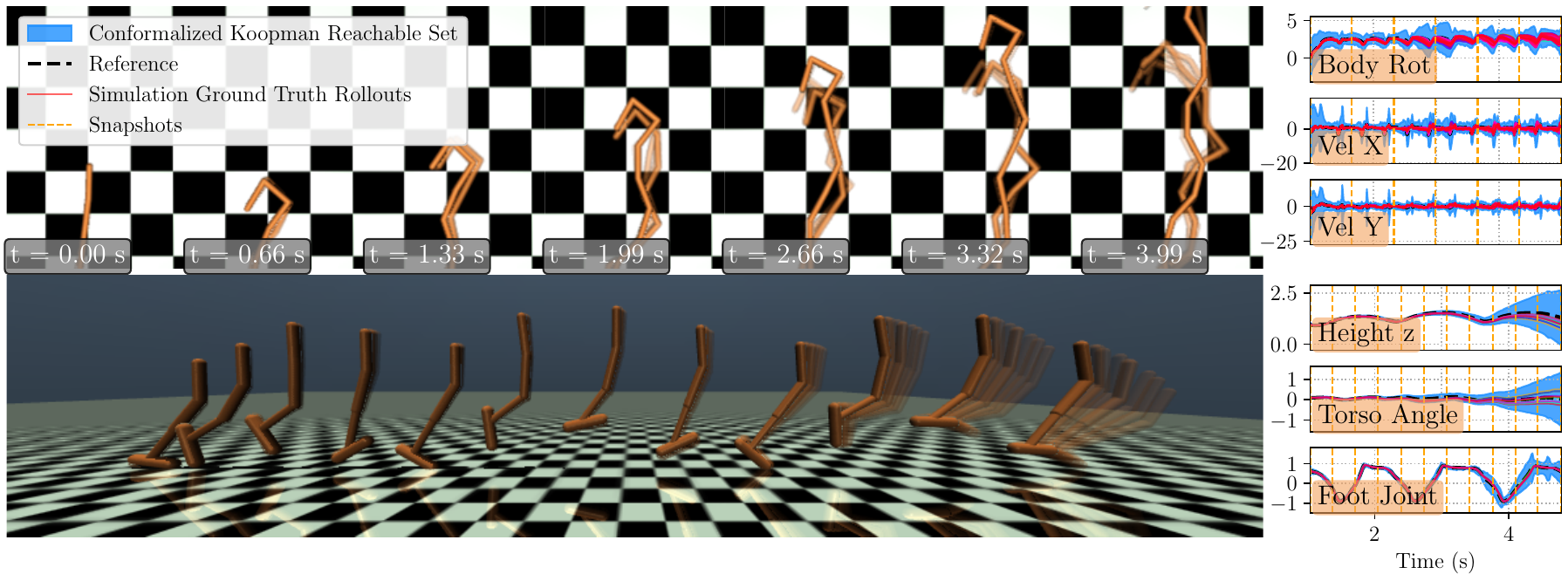}
    \caption{Executed trajectories / reachable sets for a 28D MuJoCo swimmer (a); 11D hopper (b).}
    \vspace{-12pt}
    \label{fig:hopper}
\end{figure}

\section{Introduction}
\label{introduction}

Computing reachable sets, i.e., the set of states reachable over a time horizon, is key to robot safety. However, reachability analysis for robots is difficult as they often (i) lack analytical models, (ii) have high-dimensional nonlinear dynamics, and (iii) require reasoning over long horizons. Even if analytical dynamics are available, challenges (ii) and (iii) cause nonlinear reachability tools \citep{bansal2017HJReach, DBLP:conf/cpsweek/Althoff15} to become prohibitive or to suffer from excessive overapproximation over long horizons \citep{Rober2024CARV}. Data-driven reachability \citep{10772635datadrivenreach, DBLP:journals/fmsd/BakBHKLP25} can be used without analytical models, but can be data-inefficient or lack guarantees that the estimated sets contain all possible trajectories. 
In contrast, linear reachability scales to long horizons and high dimensions \citep{10.1145/3302504.3311792}. The Koopman framework learns linear dynamics in a lifted state space, e.g., via neural networks (NNs), to approximate the nonlinear system \citep{koop_survey}. While computing reachable sets for the Koopman linearization can be more scalable than using the original nonlinear dynamics, existing methods only provide guarantees for the approximate lifted dynamics. Thus, \textit{safety may not hold for the true system} \citep{DBLP:conf/adhs/BakBDGP21, DBLP:journals/fmsd/BakBHKLP25}. Moreover, existing Koopman reachability methods analyze a \textit{single} dynamical system. However, robots often require multiple controllers to complete different tasks, which forces existing methods to recompute reachable sets from scratch for each new controller, which is inefficient. Reusing previously computed information would enable more efficient reachability analysis across controllers.

\looseness-1
To close these gaps in scalability, speed, and reusability, we propose \textbf{ScaRe-Kro} (\underline{Sca}lable \underline{Re}achability via \underline{K}oopman Ope\underline{r}at\underline{o}rs), which uses Koopman theory with an NN lifting function to perform reachability analysis and control. Unlike prior Koopman reachability methods for autonomous systems, we also design a linear controller in this lifted space to track a distribution of reference trajectories, each completing a different task. ScaRe-Kro efficiently computes closed-loop reachable sets under the induced \textit{linear tracking error dynamics} in the lifted Koopman space and maps them to the original state space via NN verification tools. To account for Koopman model error, we apply conformal prediction (CP) to derive statistically-valid error bounds. By inflating the mapped reachable sets with these CP bounds, they are guaranteed to contain the \textit{true system trajectories} with a user-specified probability (e.g., 97.5\%). Moreover, as the CP bounds are calibrated across the trajectory distribution, they can be reused across different references and thus closed-loop systems, unlike prior Koopman approaches that certify only a single autonomous system. Overall, using linear reachability and CP, our method enables scalable data-driven probabilistic verification and control, improving both efficiency and conservativeness over prior work. Our contributions are:

\begin{enumerate}
  \setlength{\itemsep}{0pt}
  \setlength{\parskip}{0pt}
  \setlength{\parsep}{0pt}
  \setlength{\topsep}{0pt}
  \setlength{\partopsep}{0pt}
  \item A scalable Koopman-based reachability framework for unknown nonlinear dynamics, using NN lifting functions and NN verification to estimate reachable sets in the original state space.
  \item An approach for inflating the resulting Koopman reachable sets using CP, to obtain probabilistic guarantees of overapproximation for the \textit{true} dynamics.
  \item Control design in the lifted space for closed-loop tracking of a reference trajectory distribution, with CP-based inflation bounds calibrated to be reusable across the distribution.
  \item Extensive evaluation, showing reliable verification for high-dimensional robotic systems (up to 28D), outperforming baselines in safety rate, computation speed, and conservativeness.
\end{enumerate}

\section{Related Work}
\label{related_work}

\paragraph{Reachability via Linear Relaxation}
\looseness-1
While exact reachable set computation for nonlinear systems is typically intractable, reachable set overapproximations (RSOAs) can often be efficiently computed to certify safety \citep{DBLP:journals/arcras/AlthoffFG21}. When the closed-loop system involves NN components (e.g., learned dynamics or controller), RSOAs can be computed via linear relaxation-based NN verification (NNV) tools (e.g., CROWN \citep{Zhang2018CROWN}) \citep{Zhang2018CROWN, Everett2021NFL, Jafarpour2024AkashNFL}. Here, conservativeness grows with the number of nonlinearities \citep{sage-mp} and can be excessive for nonlinear robot dynamics. Symbolic and one-shot methods mitigate this but raise runtime and memory costs \citep{Chen2023One-Shot, Rober2024CARV, Everett2021NFL}. We instead model the robot with Koopman-linearized dynamics, reducing the number of nonlinearities in the CG and improving scalability (i.e., computation time, conservativeness).

\paragraph{Koopman Reachability}
\looseness-1
Koopman \citep{koopman_theory} and Carleman \citep{doi:10.1137/1.9781611976847.1, Bayer2023} linearizations accelerate reachability but introduce approximation error. Moreover, while error bounds exist, these derivations are limited to quadratic or polynomial dynamics \citep{DBLP:journals/pnas/LiuKKLTC21, DBLP:journals/corr/abs-1711-02552, doi:10.1137/1.9781611976847.1}, which do not apply to many robot dynamics models. Prior Koopman reachability methods also lack reachable set coverage calibration \citep{DBLP:journals/access/ThapliyalH22} or guarantee containment only for a finite set of observed trajectories \citep{DBLP:conf/cdc/KochdumperB22}, failing to ensure safety for previously-unseen trajectories of the \textit{true} system. Prior methods also do not support NN liftings and consider autonomous systems, which corresponds to fixing a single controller. Changing controllers, as often done in robotics, requires repeating the calibration \citep{DBLP:conf/rp/ForetsS21, DBLP:conf/adhs/BakBDGP21, DBLP:journals/fmsd/BakBHKLP25}, which is inefficient. In contrast, we support NN liftings, provide probabilistic coverage guarantees for the \textit{true} dynamics, and integrate control design, enabling data-efficient reachability across multiple controllers.

\paragraph{Data-Driven Reachability}
\looseness-1
Data-driven reachability uses black-box trajectory data. Some methods adapt Hamilton-Jacobi analysis \citep{reachblackboxdynamical}, but this is costly for high-dimensional systems. Others learn reachability functions \citep{neureach} but require dense data to accurately generalize, or yield ellipsoidal RSOAs \citep{10772635datadrivenreach} assuming privileged system data (e.g., Lipschitz constants). Sampling-based methods \citep{pmlr-v168-lew22a} are limited to short horizons. Scenario optimization provides probabilistic guarantees \citep{pmlr-v242-dietrich24a} but is slower than Koopman-based methods \citep{DBLP:journals/fmsd/BakBHKLP25}.
Moreover, these methods assume a fixed controller, yet robots often switch controllers to achieve different goals, altering the closed-loop dynamics and forcing reachable sets to be recomputed. Trajectory-tracking reachability provides reusable tracking-error sets across reference trajectories \citep{DBLP:conf/cdc/HerbertCHBFT17, DBLP:conf/wafr/SinghCHTP18, DBLP:journals/ral/KnuthCOB21}. Data-driven variants learn contraction metrics \citep{DBLP:conf/cdc/ChouOB21, DBLP:conf/wafr/ChouOB22, 10161001} for tracking-based reachability of unknown systems, but finding such metrics is generally difficult. We build on this work, yielding Koopman reachable sets that are fast to compute, scalable to long horizons, and reusable for various closed-loop dynamics induced by tracking different trajectories.
\section{Preliminaries}
\label{preliminary}
We consider unknown, black-box nonlinear systems, either in open-loop \eqref{eq:open_loop_dynamics} or closed-loop \eqref{eq:closed_loop_dynamics}:

\vspace{-20pt}
\begin{subequations}
\begin{align}
    x_{t+1} &= f(x_t, u_t),
    \label{eq:open_loop_dynamics}
\end{align}

\vspace{-30pt}
\begin{align}
    x_{t+1} &= f(x_t, \pi(x_t, t)) \doteq \tilde f(x_t),
    \label{eq:closed_loop_dynamics}
\end{align}
\end{subequations}
\vspace{-20pt}

\noindent with timestep $t \in \{0, \dots, T\} \doteq \mathcal{T}$, state $x_t \in \stateSpace \subseteq \stateSpaceDim$, control $u_t \in \ctrlSpace \subseteq \ctrlSpaceDim$, and feedback control law $\pi: \mathcal{X} \times \mathcal{T} \rightarrow \mathcal{U}$. The initial state $x_0$ lies in set $\mathcal{X}_0 \subseteq \stateSpace$. Denote $\textsf{Int}(\underline{a}, \overline{a}) = \{a \mid \underline{a} \le a \le \overline{a}\}$, as an interval, where $\underline{a}, \overline{a} \in \mathbb{R}^A$ and inequalities hold element-wise, i.e., $\underline{a}_i \le a_i \le \overline{a}_i$, $1 \le i \le A$. Denote the $a$-ball as $\B_a(c) \doteq \{ x \mid \Vert x - c \Vert_\infty \le a\}$ and sequence $q_{1:Q} \doteq \{q_1, q_2, \ldots, q_Q\} = \{q_i\}_{i=1}^Q$. We use $\oplus$ and $\ominus$ to denote the Minkowski sum and difference, respectively. 

\looseness-1
\paragraph{Koopman Operators} As our work relies on Koopman operators, we discuss the basic concepts here. Consider autonomous discrete-time nonlinear dynamics $x_{t+1} = f(x_t)$. A Koopman operator $\infKoop$ is an infinite-dimensional linear operator that evolves the observables $\enc^\infty$ of the state $\infKoop \enc^\infty(x) = \enc^\infty \circ f(x)$ according to $\enc^\infty(x_{t+1}) = \infKoop \enc^\infty(x_t)$, where $\circ$ denotes function composition. As infinite-dimensional Koopman operators are intractable to obtain in practice, we define an approximation to $\infKoop$ using a finite-dimensional lifting function $\enc: \mathcal{X} \rightarrow \mathcal{Z} \subseteq \mathbb{R}^{l}$ and matrix $\ka \in \mathbb{R}^{l \times l}$, which define lifted dynamics $z_{t+1} = \ka z_{t}$,  where $z_t = \enc(x_t)$ is the lifted state and $z_{t} \in \mathcal{Z} \subseteq \mathbb{R}^{l}$. Here, $l$ can be arbitrarily selected by the choice of the lifting function $\enc$. For systems with control input \eqref{eq:open_loop_dynamics}, the lifted dynamics can be similarly written as $z_{t+1} = \ka z_{t} + \kb u_t$, where $\kb \in \mathbb{R}^{l \times m}$. In practice, $\enc$, $K_A$, and $K_B$ can be learned from data, with the lifting function $g$ parameterized using polynomials or NNs \citep{koop_survey}. In this work, we utilize an NN-based autoencoder architecture, where the encoder $\enc: \mathcal{X} \rightarrow \mathcal{Z}$ defines the lifting function and the decoder $\dec: \mathcal{Z} \rightarrow \hat{\mathcal{X}}$ defines its learned inverse, i.e., $\dec(\enc(x)) \approx x$; if $\dec$ is a perfect inverse, $\dec(\enc(x)) = x$ and $\hat{\mathcal{X}} = \mathcal{X}$. 

\paragraph{Reachability via NN Verification (NNV)} We use NNV to compute reachable sets for closed-loop systems involving NN-based lifting and inverse functions. NNV tools represent an NN as a computational graph (CG) -- a directed acyclic graph encoding the sequence of operations applied to an input \citep{Rober2024CARV}. For some CG $G$ with input set $\mathcal{S} \subseteq \mathbb{R}^{n_i}$ and output $G(\mathcal{S}) \subseteq \mathbb{R}^{n_o}$, we can compute a guaranteed overapproximation of its image $G(\mathcal{S})$ using CROWN-based \citep{Zhang2018CROWN} tools like \texttt{auto\_LiRPA} \citep{Xu2020AutoLiRPA}. These tools provide guaranteed affine lower and upper bounds, $\underline{G}$ and $\overline{G}$, on the output $G(\mathcal{S})$ for any interval $\mathcal{S}$, formalized in this proposition:

\begin{proposition}[CG Robustness \citep{Xu2020AutoLiRPA}]
\label{thm:cg_robustness}
\looseness-1For some CG $G$ and interval $\mathcal{S} \doteq \{s \in \mathbb{R}^{n_i} \mid \underline{s} \le s \le \overline{s}\}$, there are affine functions $\underline{G}$, $\overline{G}$ such that $\forall s \in \mathcal{S}$, $\underline{G}(s) \leq G(s) \leq \overline{G}(s)$. 
The inequalities hold element-wise and $\underline{G}(s) = \Psi s + \alpha$,  $\overline{G}(s) = \Phi s + \beta$, with $\Psi, \Phi \in \mathbb{R}^{n_o \times n_i}$ and $\alpha, \beta \in \mathbb{R}^{n_o}$.
\end{proposition}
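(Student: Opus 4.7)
The plan is to prove this by structural induction on the computational graph $G$, which, as a DAG of primitive operations, can be topologically sorted into a sequence of affine maps and scalar nonlinearities. The inductive hypothesis is that every intermediate signal $y$ in $G$ admits affine bounds $\Psi_y s + \alpha_y \le y \le \Phi_y s + \beta_y$ valid for all $s \in \mathcal{S}$. The base case is the identity at the input node, where $\Psi = \Phi = I$ and $\alpha = \beta = 0$, and propagating the hypothesis to the output node yields the claim with $\underline{G}(s) = \Psi s + \alpha$ and $\overline{G}(s) = \Phi s + \beta$.

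For the inductive step on affine operations $y' = Wy + b$, I would split $W = W^+ + W^-$ into its positive and negative parts so that $W^+$ preserves the direction of inequalities while $W^-$ reverses them. Combining this split with the current bounds on $y$ yields affine expressions in $s$ for new $\Psi_{y'}, \Phi_{y'}, \alpha_{y'}, \beta_{y'}$, where the upper and lower slopes swap in the $W^-$ contribution. For a scalar nonlinearity $\sigma$, I would first extract concrete intervals $[\underline{y}_i, \overline{y}_i]$ for each pre-activation component by maximizing and minimizing the current affine bounds over $\mathcal{S}$; this is routine since an affine function over an axis-aligned box attains its extrema at vertices and reduces to another sign split on $\Psi_y, \Phi_y$. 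Standard CROWN-style relaxations (e.g., the chord-and-tangent construction for ReLU, sigmoid, and tanh) then supply slopes and intercepts $a_i^{L/U}, b_i^{L/U}$ such that $a_i^L y_i + b_i^L \le \sigma(y_i) \le a_i^U y_i + b_i^U$ on $[\underline{y}_i, \overline{y}_i]$. Composing this scalar relaxation with the incoming affine bounds, again using a sign split on $a_i^{L/U}$, produces affine bounds on $\sigma(y)$ in $s$, closing the inductive step.

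The main obstacle I anticipate is the bookkeeping of sign splits so that the composed bounds remain sound throughout the graph, particularly at merge nodes whose parents must each be split and then combined consistently in the same direction (lower vs.\ upper). A secondary subtlety is designing the per-activation relaxations so that the assumed pre-activation interval $[\underline{y}_i, \overline{y}_i]$ is truly an overapproximation uniformly over $\mathcal{S}$; I would justify this by a short lemma that evaluating the current affine bounds at the corners of $\mathcal{S}$ gives a valid concrete interval. Iterating the inductive step along the topological order of $G$ produces final matrices $\Psi, \Phi \in \mathbb{R}^{n_o \times n_i}$ and vectors $\alpha, \beta \in \mathbb{R}^{n_o}$ satisfying $\underline{G}(s) = \Psi s + \alpha \le G(s) \le \Phi s + \beta = \overline{G}(s)$ for all $s \in \mathcal{S}$, which is exactly the proposition.
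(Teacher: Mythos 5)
The paper does not prove this proposition itself; it is imported verbatim as a known result of CROWN-style linear relaxation, with the proof deferred to \citep{Xu2020AutoLiRPA} and \citep{Zhang2018CROWN}. Your structural induction over the topologically sorted DAG---identity bounds at the input, positive/negative-part splits for affine nodes, concrete pre-activation intervals plus chord-and-tangent relaxations for scalar nonlinearities, and consistent sign bookkeeping at merge nodes---is precisely the standard argument underlying those works, and it is sound; the only cosmetic difference is that you present the forward-mode variant of LiRPA rather than the backward (CROWN) mode, which is immaterial here since the proposition only asserts the \emph{existence} of sound affine bounds $\underline{G}, \overline{G}$.
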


\looseness-1
\noindent Given controller $\pi$, dynamics \eqref{eq:open_loop_dynamics}, and initial set $\mathcal{X}_0$, the $T$-timestep exact reachable set is denoted as $\mathcal{X}_{0:T} \doteq \{\mathcal{X}_0, f(\mathcal{X}_t, \pi(\mathcal{X}_t,t)) \}_{t=0}^{T-1}\}$. While exact reachability analysis is generally intractable for nonlinear systems \citep{DBLP:journals/arcras/AlthoffFG21}, it is often feasible to compute a reachable set overapproximation (RSOA) $\overline{\mathcal{X}}_{0:T}$, which satisfies $\mathcal{X}_t \subseteq \overline{\mathcal{X}}_t$ for all $t \in \{0,\ldots,T\}$. To obtain $\overline{\mathcal{X}}_{0:T}$, Prop. \ref{thm:cg_robustness} can be applied to the CG $F$ of the $T$-step composition ${\tilde f^{T}}(\cdot) \doteq \tilde{f} \circ \dotsb \circ \tilde{f}(\cdot)$ of the closed-loop dynamics $\tilde f$ \eqref{eq:closed_loop_dynamics}, i.e., $F(x_0) \doteq \{\tilde f(x_t) \doteq f(x_t, \pi(x_t, t)) \}_{t=0}^{T}$, with \texttt{auto\_LiRPA} computing the RSOA.

\paragraph{Conformal Prediction (CP)}
Our RSOA computations also rely on CP. In this work, we adopt split CP, which partitions i.i.d.\ input-output pairs $(v^{(i)}, y^{(i)}) \in \mathcal{V}\times\mathcal{Y}$ from a dataset $\mathcal{D} \doteq\{(v^{(i)}, y^{(i)})\}_{i=1}^{L+K}$ into a size-$L$ training set $\mathcal{D}_T$ and a size-$K$ calibration set $\mathcal{D}_C$. For a given prediction function $\mu:\mathcal{V}\!\to\!\mathcal{Y}$, we compute a nonconformity score $R^{(i)}\!\doteq s(y^{(i)},\mu(v^{(i)}))$ for each of the $K$ calibration pairs in $\mathcal{D}_C$, where $s: \mathcal{Y}\times \mathcal{Y} \rightarrow \mathbb{R}_{\ge 0}$ is a chosen nonconformity score function. As an example, we can express the $\ell_2$ prediction error of a learned dynamics model $\hat f$ in this framework by defining $v = (x_t, u_t)$, $y = x_{t+1}$, $\mu(v) = \hat f(v)$, and $s(y, \mu(v)) = \Vert y - \mu(v)\Vert_2$. Given a miscoverage level $\delta\in(0,1)$, a threshold $C$ is defined as the ($1-\delta$)-quantile of the calibration scores, $C \doteq \text{Quantile}_{1-\delta}(R^{(1)},\dots,R^{(K)},\infty)$. This threshold defines the prediction set $\mathcal{Y}(v) \doteq \{\,y\in\mathcal{Y} \mid s(y,\mu(v))\le C \,\}$, which satisfies the marginal coverage guarantee $\mathbb{P}(y^{(0)}\in\mathcal{Y}(v^{(0)})) \ge 1-\delta$ for an unseen data-point $(y^{(0)}, v^{(0)})$ that is exchangeable with $\mathcal{D}$ \citep{cp_survey}.

\section{Problem Statement}
\label{problem_statement}
\looseness-1We are given black-box access to $f$ \eqref{eq:open_loop_dynamics}, i.e., we do not know its analytical form and its output can only be estimated via data. We are a planner $P: \mathcal{X} \rightarrow \mathcal{X}^{T+1} \times \mathcal{U}^T$ that generates open-loop reference trajectories $(x^\textrm{ref}_{0:T}, u^\textrm{ref}_{0:T-1})$ that satisfy \eqref{eq:open_loop_dynamics}, starting from an initial condition $x_0^\textrm{ref} \in \mathcal{X}_0 \ominus \mathcal{B}_\epsilon(0)$ that is uniformly distributed over $\mathcal{X}_0 \ominus \mathcal{B}_\epsilon(0)$.  We assume nothing further about $P$; it can be a learned or traditional planner. At runtime, the true initial state $x_0$ may be perturbed from $x_0^\textrm{ref}$, i.e., $x_0 \in \mathcal{B}_\epsilon(x_0^\textrm{ref})$, where $x_0$ is uniformly distributed over $\mathcal{B}_\epsilon(x_0^\textrm{ref})$. To be robust to this error, we design a controller $\pi: \mathcal{X}\times \mathcal{T}\rightarrow \mathcal{U}$ that tracks references from $P$ and compute high-probability RSOAs $\overline{\mathcal{X}}_{0:T}$ for the resulting closed-loop dynamics \eqref{eq:closed_loop_dynamics}, such that $\mathbb{P}\big(\bigwedge_{t=0}^T (x_t \in \overline{\mathcal{X}}_t) \big) \ge 1-\delta$. This process should scale to high-dimensional nonlinear systems over long horizons (e.g., $n \ge 25$, $T \ge 200$). We solve:
\begin{problem}[Model learning \& controller design]\label{prob:koopman}
Given $N$ length-$(T+1)$ \textbf{open-loop}, dynamically-feasible trajectories generated by $P$ via black-box queries of \eqref{eq:open_loop_dynamics}, forming dataset $\mathcal{D}=\{(x_{0:T}^{(i)},u_{0:T-1}^{(i)}\}_{i=1}^N$, (a) train a Koopman model $(\enc,\dec,\ka,\kb)$ and (b) use the Koopman model to design a time-varying trajectory-tracking controller $\pi: \mathcal{X} \times \mathcal{T} \rightarrow \mathcal{U}$ to stabilize to references generated by $P$.
\end{problem}
\begin{problem}[($1-\delta$)-confident RSOA computation]\label{prob:verification}
Given $M$ length-$(T+1)$ \textbf{closed-loop} trajectories generated by tracking references from $P$ with the controller $\pi$ from Prob. \ref{prob:koopman}, compute a ($1-\delta$)-confident RSOA $\overline{\mathcal{X}}_{0:T}$ for the closed-loop dynamics \eqref{eq:closed_loop_dynamics}, which guarantees for any random reference $(x^\textrm{ref}_{0:T}, u^\textrm{ref}_{0:T-1})$ produced by the planner $P$ with $x_0^\textrm{ref} \in \mathcal{X}_0$ and initial state $x_0 \in \mathcal{B}_\epsilon(x_0^\textrm{ref})$, the closed-loop trajectory remains in $\overline{\mathcal{X}}_{0:T}$ with probability at least $1-\delta$, i.e., $\mathbb{P}\big(\bigwedge_{t=0}^T (x_t \in \overline{\mathcal{X}}_t) \big) \ge 1-\delta$.
\end{problem}

\section{Methodology}
\label{methodology}

Our method (Fig. \ref{fig:blk_diag}, Alg. \ref{alg:online_ckrs}) learns a Koopman operator (Sec. \ref{sec:koop_train}), uses it for control design (Sec. \ref{sec:controller_design}), performs fast linear reachability analysis on the closed-loop lifted Koopman dynamics (Sec. \ref{sec:reachability}), and inflates these reachable sets via CP to obtain probabilistic coverage guarantees (Sec. \ref{sec:conformalize_KRS}).

\subsection{Koopman Operator Training}
\label{sec:koop_train}
To solve Prob. \ref{prob:koopman}, we train a Koopman operator model $(\enc, \dec, \ka, \kb)$ on a dataset $\mathcal{D} \doteq \{(x_{0:T}^{(i)}, u_{0:T-1}^{(i)})\}_{i=1}^N$ of open-loop trajectories from a planner $P$ (details in Sec. \ref{sec:experiments}). We use a composite loss $\mathcal{L} = \sum_{i=1}^N(\lambda_1 \mathcal{L}_1^{(i)} + \lambda_2 \mathcal{L}_2^{(i)})$. The first term, $\mathcal{L}_{\text{1}}^{(i)} = \sum_{j=0}^{T} \big[\|x_{j}^{(i)} - \dec (\enc (x_{j}^{(i)}))\|^2 + \|\enc(x_j^{(i)}) - \enc(\dec(\enc(x_j^{(i)})))\|^2\big]$, enforces autoencoder accuracy via state reconstruction error and a latent consistency loss. The second term, $\mathcal{L}_{\text{2}}^{(i)} = \sum_{j=t+1}^{t+H} \big[\| x_{j}^{(i)} - \dec (\check z_{j}^{(i)})\|^2 + \|\enc (x_{j}^{(i)}) - \check z_j^{(i)}\|^2\big]$, where $\check z_j^{(i)} \doteq \ka \check z_{j-1}^{(i)} + \kb u_{j-1}^{(i)}$ and $\check z_0^{(i)}=\enc(x_0^{(i)})$, is a multi-step (horizon $H$) dynamics loss penalizing prediction errors from the linear model ($\ka$, $\kb$) in both latent and state space.The latent consistency term is crucial for long-horizon prediction accuracy. We initialize $\ka$ as identity and $\kb$ with Xavier uniform, train via learning rate annealing and weight decay, and set ($\enc, \dec$) to have ReLU or GELU activations.
\subsection{Koopman Tracking LQR Design}
\label{sec:controller_design}
To reliably imitate state/control reference trajectories generated by the planner $P$ \textit{in the lifted state space} $\mathcal{Z}$, a tracking controller $\pi$ is required to stabilize against initial condition perturbations $x_0 - x_0^\textrm{ref}$ and to reject disturbances due to model error in the learned lifted dynamics $z_{t+1} = \ka z_t + \kb u_t$. To obtain $\pi$, we use the linearity of the lifted dynamics to design a linear quadratic regulator (LQR) that tracks trajectories in $\mathcal{Z}$. Given a reference state-space trajectory $x_{0:T}^{\text{ref}}$ produced by $P$, we map it to the lifted state space as $z_{0:T}^{\text{ref}} \doteq \{\enc(x_t^{\text{ref}})\}_{t=0}^T$. The corresponding feedforward control trajectory $u_{0:T-1}^{\text{ref}}$ which minimizes the lifted-state imitation error is then computed as:
\begin{equation}
\label{eq:u_ff}
u_{t}^{\text{ref}} = \kb^{\dagger} (z_{t+1}^{\text{ref}} - \ka z_{t}^{\text{ref}}),
\end{equation}
where $\kb^{\dagger}$ is the pseudo-inverse of $\kb$. The error states $\delfz_{t} = z_{t} - \zref_{t}$ and controls $\delfu_{t} = u_{t} - \uref_{t}$ follow linear error dynamics $\delfz_{t+1} = \ka \delfz_{t} + \kb \delfu_{t}$, starting from an initial error $\delfz_0 = \enc(x_0) - \zref_0$, where $x_0 \in \B_{\epsilon}(x_{0}^{\text{ref}})$. 
We solve an LQR problem, finding the feedback controls $\delfu_{0:T-1}$ that minimize the finite-horizon quadratic cost $\delfz_T^\top Q_T \delfz_T + \sum_{t=0}^{T-1} (\delfz_t^\top Q \delfz_t + \delfu_t^\top R \delfu_t)$, subject to the linear dynamics $\delfz_{t+1} = \ka \delfz_t + \kb \delfu_t$, where $Q \succ 0$, $R \succ 0$, and $Q_T \succ 0$. In particular, we solve a backward Riccati recursion to obtain the optimal feedback gains $\lqrmat_{0:T-1}$. This yields the tracking control law in the lifted state space \eqref{eq:u_latent_cl}, which can be applied on the nonlinear system \eqref{eq:open_loop_dynamics} as \eqref{eq:u_real_cl} by measuring and encoding the state $x_t$:

\begin{minipage}{0.48\textwidth}
\begin{equation}
\label{eq:u_latent_cl}
u_t = \uref_t - \lqrmat_t (z_t - \zref_t),
\end{equation}
\end{minipage}
\hfill
\begin{minipage}{0.48\textwidth}
\begin{equation}
\label{eq:u_real_cl}
u_t = \uref_t - \lqrmat_t (\enc(x_t) - \zref_t).
\end{equation}
\end{minipage}
\subsection{Reachability Analysis for Closed-Loop Koopman Dynamics}\label{sec:reachability}
\looseness-1
The high computational cost of nonlinear reachability motivates our use of Koopman operators. Standard RSOA computation repeatedly overapproximates nonlinearities at each timestep in both $f$ and $\pi$ (which typically must be nonlinear to stabilize $f$), leading to high conservativeness. A $T$-step RSOA requires $2T$ nonlinear function evaluations, causing reachable sets to become vacuously loose as $T$ increases. To reduce conservativeness, partition-based methods split the input set $\overline{\mathcal{X}}_t$, propagate each partition element, and union the result; however, this incurs computational and memory costs that increases with the cardinality of the partition. In contrast, Koopman-based reachability improves scalability and reduces conservativeness without partitioning overhead by reducing the number of nonlinearities. Mapping to the lifted space $\mathcal{Z}$ creates linear dynamics, eliminating the need to bound nonlinearities during propagation. The linear dynamics also permit a linear tracking controller. Consequently, computing an RSOA $\overline{\mathcal{X}}_t$ for any timestep $t$ requires bounding only a single application of the nonlinear encoder ($\enc$) and decoder ($\dec$), reducing the number of nonlinear bounding operations from $2T$ to just $2$. 
To apply this insight formally for RSOA computation, we define the Tracking Koopman Feedback Loop (TKFL), denoted $\tnfl$, as a CG the describes the closed-loop dynamics for tracking $z_{0:T}^{\text{ref}} \doteq \{\enc(x_t^{\text{ref}})\}_{t=0}^T$ in the lifted state space using the feedback control law \eqref{eq:u_real_cl}.  Specifically, the TKFL $\tnfl : \mathcal{X}_0 \times \mathcal{U}^T \times \mathcal{Z}^{T+1} \rightarrow \hat{\mathcal{X}}^{T+1}$ maps the initial state $x_0 \in \B_{\epsilon}(x_{0}^{\text{ref}})$, the feedforward control trajectory $u_{0:T-1}^{\text{ref}}$ \eqref{eq:u_ff}, and the lifted reference $z_{0:T}^{\text{ref}}$ to the decoded trajectory $\hat{x}_{0:T} \doteq \{\dec(z_{t})\}_{t=0}^T$ representing the predicted closed-loop trajectory \eqref{eq:latent_rollout_tnfl} in the original state space, where the lifted states $z_t$ are updated via the closed-loop lifted dynamics \eqref{eq:latent_closed_loop}: 

\begin{equation}
\label{eq:latent_rollout_tnfl}
    \hspace{-23pt}\hat{x}_{0:T} = \{x_0, \{\dec (\tilde f_z(z_t))\}_{t=0}^{T-1}\},
\end{equation}
\begin{equation}
\label{eq:latent_closed_loop}
    \hspace{-30pt} z_{t+1} = \ka z_t + \kb (u_t^\textrm{ref} - G_t(z_t - z_t^\textrm{ref})) \doteq \tilde f_z(z_t).\hspace{-20pt}
\end{equation}

\begin{figure}[ht]
  \centering
  \includegraphics[width=0.85\textwidth]{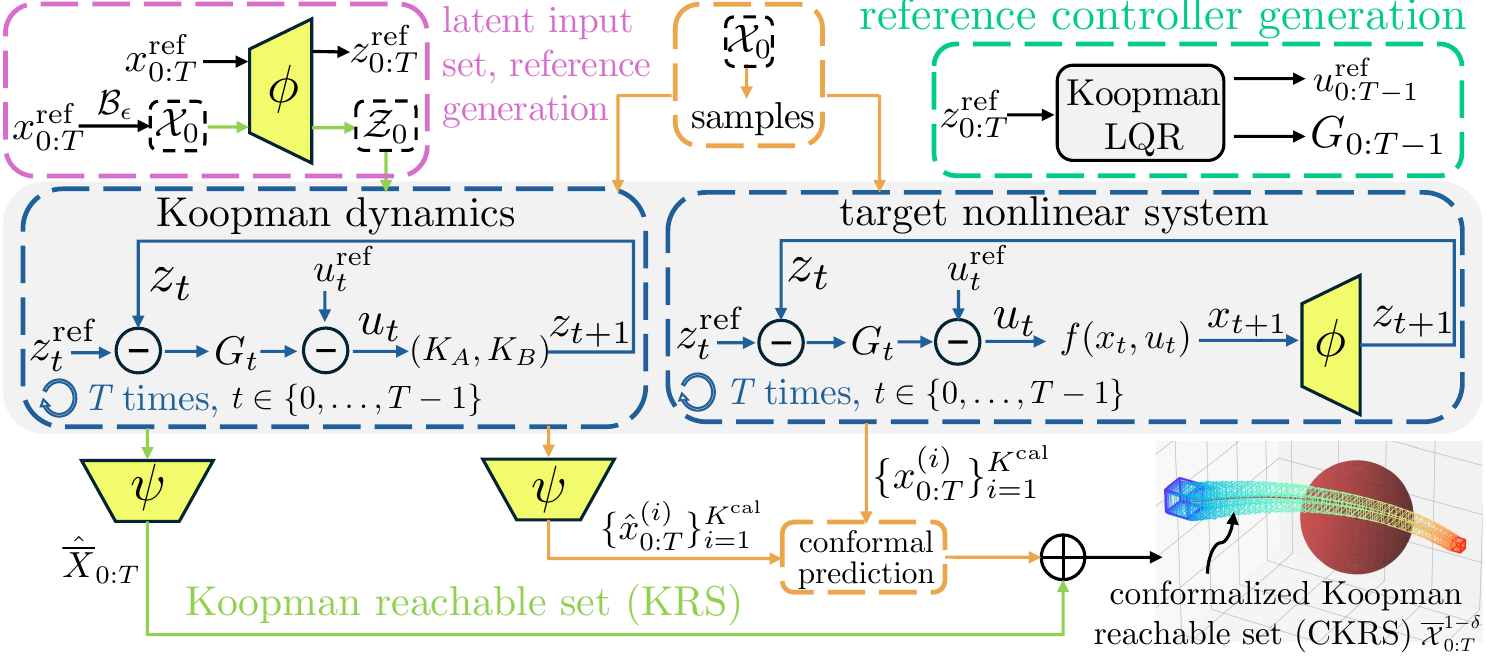}
  \caption{A block diagram that describes the flow of our method (ScaRe-Kro). Green and orange arrows represent KRS and CP computations, respectively. }
  \label{fig:blk_diag}
\end{figure}

To summarize, the CG is constructed as follows (see Fig. \ref{fig:blk_diag} for visualization). First, we map the initial state via $\enc$ into the latent space, computing $z_0 = \enc(x_0)$. Second, we unroll the $T$-fold recursive application of the single-step latent dynamics, $z_{t+1} = \ka z_t + \kb u_t$,  where $u_t$ is the closed-loop tracking control computed using \eqref{eq:u_latent_cl}, to obtain the latent trajectory $z_{0:T} = \{z_0, z_2, \cdots, z_T\}$. Third, $z_{0:T}$ is mapped back to the original state space via the decoder $\dec$ to obtain $\hat{x}_{0:T}$. 
The CG $\tnfl$ is passed to the \texttt{auto\_LiRPA} library \citep{Xu2020AutoLiRPA}, which enables set-based bound propagation through $\tnfl$ using CROWN \citep{Zhang2018CROWN}. \texttt{auto\_LiRPA} computes vector-valued affine bounding functions, $\underline{\tnfl}$ and $\overline{\tnfl}$, which enables us to compute an RSOA $\hat{\overline{\mathcal{X}}}_{0:T}$ for the decoded closed-loop lifted dynamics \eqref{eq:latent_rollout_tnfl} by considering specific components of $\tnfl$, $\underline{\tnfl}_t$ and $\overline{\tnfl}_t$, that bound the states reached under \eqref{eq:latent_rollout_tnfl} at timestep $t$, starting from any initial condition $x_0 \in {\mathcal{X}}_0$. This resulting hyper-rectangular set $\hat{\overline{\mathcal{X}}}_{0:T}$, which we refer to as the Koopman reachable set (KRS), is found by solving the optimization \eqref{eq:Koopman_RSOA}, which is computed efficiently by \texttt{auto\_LiRPA}:
\begin{equation}\small
    \hspace{-5pt}
    \hat{\overline{\mathcal{X}}}_{0:T} \doteq \big\{ \textsf{Int}\big( \textstyle\min_{\substack{x_0 \in \mathcal{X}_0}} \underline{\tnfl}_t(x_0, \uref_{0:T-1}, \zref_{0:T}), \max_{\substack{x_0 \in \mathcal{X}_0}} \overline{\tnfl}_t(x_0, \uref_{0:T-1}, \zref_{0:T}) \big) \big\}_{t=0}^{T}
    \hspace{-3pt}
    \label{eq:Koopman_RSOA}
\end{equation}
\looseness-1The KRS $\hat{\overline{\mathcal{X}}}_{0:T}$ overapproximates $\hat{x}_{0:T}$ so that $\hat{x}_{t} \in \hat{\overline{\mathcal{X}}}_{t}$, for all $t \in \mathcal{T}$, ensuring the decoded closed-loop lifted dynamics \eqref{eq:latent_rollout_tnfl} remain in $\hat{\overline{\mathcal{X}}}_{0:T}$. Formally, we state the following result (proof in App. \ref{app:proofs}):

\setcounter{theorem}{0}
\begin{lemma}[KRS overapproximation]
\label{lem:koopman_containment}
\renewcommand{\thelemma}{1}
Let $\tnfl$ denote the closed-loop CG of TKFL, and let $\mathcal{X}_0$ be the initial set. Suppose the KRS $\hat{\overline{\mathcal{X}}}_{0:T}$ is defined as in \eqref{eq:Koopman_RSOA}. Then, for any initial state $x_0 \in \mathcal{X}_0$, the decoded closed-loop trajectory \eqref{eq:latent_rollout_tnfl}, i.e.,
$\hat{x}_{0:T} = \tnfl(x_0, \zref_{0:T}, \uref_{0:T-1})$ satisfies $\hat{x}_t \in \hat{\mathcal{X}}_t$ for all $t \in \mathcal{T}$.
\end{lemma}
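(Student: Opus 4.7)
The plan is to reduce the lemma to a direct invocation of Proposition~\ref{thm:cg_robustness} applied to the TKFL computational graph $\tnfl$, treating the reference signals $\zref_{0:T}$ and $\uref_{0:T-1}$ as \emph{fixed} parameters and letting only $x_0$ range over the interval $\mathcal{X}_0$. First I would verify that $\tnfl$ is a valid CG in the sense required by Proposition~\ref{thm:cg_robustness}: it is the composition of the encoder NN $\enc$, the $T$-fold unrolling of the closed-loop lifted update \eqref{eq:latent_closed_loop} (which, once $\zref_{0:T}$ and $\uref_{0:T-1}$ are fixed, is affine in $z_t$ because the LQR gains $G_t$ act linearly on $z_t - \zref_t$), and the decoder NN $\dec$ applied at each timestep. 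Since every layer is supported by \texttt{auto\_LiRPA}, Proposition~\ref{thm:cg_robustness} yields affine lower and upper bounding functions $\underline{\tnfl}$ and $\overline{\tnfl}$ such that, element-wise,
\begin{equation*}
\underline{\tnfl}(x_0, \uref_{0:T-1}, \zref_{0:T}) \le \tnfl(x_0, \uref_{0:T-1}, \zref_{0:T}) \le \overline{\tnfl}(x_0, \uref_{0:T-1}, \zref_{0:T})
\end{equation*}
for every $x_0 \in \mathcal{X}_0$.

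Next I would extract the per-timestep components $\underline{\tnfl}_t$, $\overline{\tnfl}_t$, and appeal to the definition of the decoded trajectory in \eqref{eq:latent_rollout_tnfl}, which identifies $\hat{x}_t$ with the $t$-th block of $\tnfl(x_0, \uref_{0:T-1}, \zref_{0:T})$. Restricting the element-wise sandwich bound above to coordinate $t$ gives $\underline{\tnfl}_t(x_0, \uref_{0:T-1}, \zref_{0:T}) \le \hat{x}_t \le \overline{\tnfl}_t(x_0, \uref_{0:T-1}, \zref_{0:T})$ for every $x_0 \in \mathcal{X}_0$. Taking the coordinate-wise minimum of the lower bound and coordinate-wise maximum of the upper bound over $x_0 \in \mathcal{X}_0$ preserves the inequality at the particular $x_0$ that generated $\hat{x}_t$, so
\begin{equation*}
\min_{x_0 \in \mathcal{X}_0} \underline{\tnfl}_t(x_0, \uref_{0:T-1}, \zref_{0:T}) \le \hat{x}_t \le \max_{x_0 \in \mathcal{X}_0} \overline{\tnfl}_t(x_0, \uref_{0:T-1}, \zref_{0:T}),
\end{equation*}
which by the definition of $\textsf{Int}(\cdot,\cdot)$ places $\hat{x}_t$ inside the $t$-th hyperrectangle of \eqref{eq:Koopman_RSOA}. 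Iterating over $t \in \mathcal{T}$ finishes the proof.

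The argument is essentially bookkeeping once the CG $\tnfl$ is set up correctly, so I do not expect a deep obstacle. The only subtlety worth flagging is the justification that the element-wise $\min$ and $\max$ over $x_0 \in \mathcal{X}_0$ are well-defined and attained: since $\underline{\tnfl}_t$ and $\overline{\tnfl}_t$ are affine in $x_0$ by Proposition~\ref{thm:cg_robustness} and $\mathcal{X}_0$ is a compact interval, each coordinate extremum is attained at a vertex, which is precisely how \texttt{auto\_LiRPA} realizes \eqref{eq:Koopman_RSOA}. A secondary point is that the per-timestep indexing of $\tnfl$ must be consistent with \eqref{eq:latent_rollout_tnfl}, including $\hat{x}_0 = x_0 \in \mathcal{X}_0 \subseteq \hat{\overline{\mathcal{X}}}_0$ at the base timestep; this should be checked to handle the boundary case cleanly.
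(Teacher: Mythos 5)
Your proposal is correct and follows essentially the same route as the paper's proof: apply Proposition~\ref{thm:cg_robustness} to the TKFL graph to obtain sound affine bounds, sandwich $\hat{x}_t$ at the specific $x_0$, and then loosen via the min/max over $\mathcal{X}_0$ to land inside the interval of \eqref{eq:Koopman_RSOA}. The extra remarks on attainment of the extrema and the $t=0$ boundary case are fine additions but not needed beyond what the paper already argues.
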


\looseness-1\noindent Crucially, the KRS overapproximates the reachable set for the decoded Koopman linearized dynamics \eqref{eq:latent_rollout_tnfl}, \textit{not the true nonlinear dynamics} \eqref{eq:closed_loop_dynamics}, due to Koopman model error. This motivates Sec. \ref{sec:conformalize_KRS}.

\subsection{Conformalizing Koopman Reachable Sets}
\label{sec:conformalize_KRS}
\looseness-1
We now explain how to inflate the KRS to guarantee it contains the true system \eqref{eq:closed_loop_dynamics} trajectories with a user-specified probability $1-\delta$. We use CP to compute a high-probability upper bound on the state-space prediction error of the decoded closed-loop lifted dynamics \eqref{eq:latent_rollout_tnfl}. Specifically, we adapt the single nonconformity score approach (SNSA) \citep{cp_survey} to compute state- and dimension-dependent error bounds using a calibration dataset $\mathcal{D}_C = \{(x_{0:T}^{(i)}, \hat x_{0:T}^{(i)})\}_{i=1}^{K^\textrm{cal}}$ of trajectories of the true closed-loop dynamics $x_{0:T}$ \eqref{eq:closed_loop_dynamics} and the decoded closed-loop lifted dynamics $\hat x_{0:T}$  \eqref{eq:latent_rollout_tnfl}. To construct each data tuple in $\mathcal{D}_C$, we sample a reference trajectory $x_{0:T}^{\text{ref},(i)}$ from planner $P$ and an initial state $x_0^{(i)}$ uniformly from $\B_{\epsilon}(x_{0}^{\text{ref}})$. We then compute the latent reference trajectory $z_{0:T}^{\text{ref}, (i)} = \{\enc(x_{t}^{\text{ref},(i)})\}_{t=0}^T$ and the feedforward control sequence $u_{0:T-1}^{\text{ref}, (i)}$ using \eqref{eq:u_ff}. We use $u_{0:T-1}^{\text{ref}, (i)}$ to calculate the trajectory returned by 1) tracking $z_{0:T}^{\text{ref}}$ in the lifted state space (\ref{eq:latent_rollout}) and decoding to the state space via $\dec$ and 2) executing the Koopman controller on the true \textit{nonlinear dynamics} (\ref{eq:actual_rollout}):

\noindent\begin{minipage}[b]{0.54\textwidth}
\begin{equation}\small
\label{eq:latent_rollout}
\hspace{-5pt}\hat{x}_{1:T}^{(i)} =
\bigl\{\, \dec\bigl(\ka z_{t} + \kb\bigl(\uref_t - G_t(z_t - \zref_t)\bigr)\bigr) \,\bigr\}_{t=0}^{T-1}\hspace{-6pt}
\end{equation}
\end{minipage}%
\hspace{\fill} 
\begin{minipage}[b]{0.44\textwidth} 
\begin{equation}\small
\label{eq:actual_rollout}
\hspace{-1pt}x_{1:T}^{(i)} = \{f(x_t, \uref_t - G_t (\enc(x_t) - \zref_t)\}_{t=0}^{T-1}
\end{equation}
\end{minipage}

\looseness-1
The prediction error for each state dimension $j \in \{1, \dots, n\}$ at time $t$ is denoted $e_{t,j}^{(i)} = x_{t,j}^{(i)} - \hat{x}_{t,j}^{(i)}$. This gives a calibration set of error trajectories $\mathcal{D}_{E} = \{\{e_{t}^{(i)}\}_{t=0}^{T}\}_{i=1}^{K^{\text{cal}}}$, where each $e_t^{(i)} \in \mathbb{R}^n$. Each error trajectory $e^{(i)}$ is normalized into a single non-conformity score $\ncscore^{(i)} = \max_{t,j} ( \lambda_{t,j} |e_{t,j}^{(i)}|)$, which is the maximum weighted error over all time steps $t \in \{0, \dots, T\}$ and state dimensions $j \in \{1, \dots, n\}$. Here, the normalization constant $\lambda_{t,j} \doteq 1 / (e_{t,j}^{\text{max}} + \sigma) \in \mathbb{R}_{> 0}$ is computed using a separate normalization dataset $\mathcal{D}_N$. This dataset, $\mathcal{D}_N$, consists of $M^{\lambda}$ additional error trajectories, $\{e^{(i)}\}_{i=K^{\text{cal}}+1}^{K^{\text{cal}}+M^{\lambda}}$, collected using the same process as the calibration data. Here, $\sigma$ is a small positive constant to prevent division by zero. The term $e_{t,j}^{\text{max}}$ is the dimension-wise and time-wise maximum absolute error observed across $\mathcal{D}_N$, i.e., $e_{t,j}^{\text{max}} = \max_{i} |e_{t,j}^{(i)}|$ where $i \in \{K^{\text{cal}}+1, \dots, K^{\text{cal}}+M\}$. 

\looseness-1
Now, given a user-defined miscoverage rate $\delta \in (0, 1)$, we compute the $(1-\delta)$ quantile of the non-conformity scores $C = \text{Quantile}_{1-\delta}(\{\ncscore^{(i)}\}_{i=1}^{K^{\text{cal}}} \cup \{\infty\})$ by finding the $p$-th smallest value of the scores $\{\ncscore^{(i)}\}_{i=1}^{K^{\text{cal}}}$, where $p \equiv \lceil (K^{\text{cal}} + 1)(1-\delta) \rceil$. The final, time- and dimension-dependent error bound $\bar e_{t,j} \in \mathbb{R}_{\ge 0}$ is obtained by reversing the normalization $\bar e_{t,j} = C / \lambda_{t,j}$. These bounds ensure that for an unseen test data-point $(x_{0:T}^{(0)}, \hat x_{0:T}^{(0)})$, the true prediction error $|e_{t,j}^{(0)}|$ will be contained within the bound $\bar e_{t,j}$ for all $t$ and $j$, with probability at least $1-\delta$, i.e.,
$$\mathbb{P}\big( |e_{t,j}^{(0)}| \le \bar e_{t,j}, \forall t \in \{0, \dots, T\}, \forall j \in \{1, \dots, n\} \big) \ge 1 - \delta$$

\noindent We use these bounds to inflate the KRS computed in \eqref{eq:Koopman_RSOA} to compute RSOAs for the original closed-loop dynamics \eqref{eq:closed_loop_dynamics} that are valid with probability at least $1-\delta$, using \eqref{eq:conformal_inflation} (proof in App. \ref{app:proofs}):

\begingroup
\setlength{\abovedisplayskip}{6pt}
\setlength{\belowdisplayskip}{6pt}
\setcounter{theorem}{0}
\begin{theorem}[CKRS Coverage Guarantee]
\label{thm:ckrs_coverage}
Let $\hat{\overline{\mathcal{X}}}_{0:T}$ be the KRS defined in \eqref{eq:Koopman_RSOA} and let $\bar e_t = [\bar e_{t,1}, \ldots, \bar e_{t,n}]^\top \in \mathbb{R}^n$. Define the conformalized Koopman reachable set (CKRS) as

\begin{equation}
\overline{\mathcal{X}}_{0:T}^{1-\delta} \equiv \bigl\{ \hat{\overline{\mathcal{X}}}_t \oplus \mathrm{diag}(\bar e_t) \mathcal{B}_1(0) \bigr\}_{t=0}^T,
\label{eq:conformal_inflation}
\end{equation}

\noindent where $\mathrm{diag}(\bar e_t) \in \mathbb{R}^{n\times n}$ is a diagonal matrix with $\bar e_t$ on its diagonal. Then, for a new reference trajectory $x_{0:T}^{\textrm{ref}, (0)}$ drawn from the same distribution, the CKRS contains the true closed-loop system trajectory generated by \eqref{eq:closed_loop_dynamics} with probability at least $1-\delta$, i.e., $\mathbb{P}\big( \bigwedge_{t=0}^T x_t^{(0)} \in \overline{\mathcal{X}}_t^{1-\delta} \big) \ge 1-\delta$.
\end{theorem}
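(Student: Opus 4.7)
The plan is to split the event $\bigwedge_{t=0}^T x_t^{(0)} \in \overline{\mathcal{X}}_t^{1-\delta}$ into a deterministic containment supplied by Lemma~\ref{lem:koopman_containment} and a probabilistic error bound supplied by split CP, then glue the two together through the Minkowski-sum definition of the CKRS. I would fix a test draw consisting of a reference $x_{0:T}^{\text{ref},(0)}$ sampled from the planner $P$ and an initial state $x_0^{(0)}$ sampled uniformly from $\B_\epsilon(x_0^{\text{ref},(0)})$, and consider the paired rollouts $\hat x_{0:T}^{(0)}$ from \eqref{eq:latent_rollout} and $x_{0:T}^{(0)}$ from \eqref{eq:actual_rollout}, with pointwise error $e_t^{(0)} = x_t^{(0)} - \hat x_t^{(0)}$.

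First, I would apply Lemma~\ref{lem:koopman_containment} at this initial condition: since $\tnfl$ and $\hat{\overline{\mathcal{X}}}_{0:T}$ are defined via \eqref{eq:Koopman_RSOA} for the full initial set $\mathcal{X}_0 \supseteq \B_\epsilon(x_0^{\text{ref},(0)})$, the lemma yields $\hat x_t^{(0)} \in \hat{\overline{\mathcal{X}}}_t$ for every $t \in \mathcal{T}$, with no probabilistic qualification. Second, I would invoke the split-CP coverage property for the scalar nonconformity score $\ncscore^{(0)} = \max_{t,j} \lambda_{t,j} |e_{t,j}^{(0)}|$. The calibration scores $\{\ncscore^{(i)}\}_{i=1}^{K^{\text{cal}}}$ and the test score $\ncscore^{(0)}$ arise from an identical sampling pipeline (reference drawn from $P$, initial state uniform on the $\epsilon$-ball, the same controller $\pi$ and encoder/decoder), and the weights $\lambda_{t,j}$ are estimated from the \emph{disjoint} dataset $\mathcal{D}_N$ so they are fixed with respect to the calibration/test pool; thus exchangeability of the score tuple holds and the standard split-CP argument yields $\mathbb{P}(\ncscore^{(0)} \leq C) \geq 1-\delta$. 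Unfolding the max and the normalization gives the equivalent statement $\mathbb{P}\bigl(|e_{t,j}^{(0)}| \leq \bar e_{t,j}\ \forall t,j\bigr) \geq 1-\delta$, i.e., $e_t^{(0)} \in \mathrm{diag}(\bar e_t)\B_1(0)$ for all $t$ simultaneously on this coverage event.

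Third, on the coverage event I would combine the two facts via the identity $x_t^{(0)} = \hat x_t^{(0)} + e_t^{(0)}$: since $\hat x_t^{(0)} \in \hat{\overline{\mathcal{X}}}_t$ and $e_t^{(0)} \in \mathrm{diag}(\bar e_t)\B_1(0)$, the definition of Minkowski sum gives $x_t^{(0)} \in \hat{\overline{\mathcal{X}}}_t \oplus \mathrm{diag}(\bar e_t)\B_1(0) = \overline{\mathcal{X}}_t^{1-\delta}$ for every $t$, so the conjunction $\bigwedge_{t=0}^T x_t^{(0)} \in \overline{\mathcal{X}}_t^{1-\delta}$ holds. The probability of this conjunction therefore inherits the CP lower bound $1-\delta$, which is the claim.

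The main obstacle I anticipate is justifying that a \emph{single} probability budget $\delta$ suffices for a statement that is jointly quantified over time and state dimension, rather than a Bonferroni-inflated one. The key observation to emphasize is that collapsing the residual trajectory into one max-scaled scalar score $\ncscore$ \emph{before} quantile calibration bakes the $(t,j)$-conjunction into the very quantity being controlled, so the threshold $C$ governs all $(t,j)$ pairs simultaneously with no union-bound loss. A secondary subtlety worth spelling out is that the clean separation of $\mathcal{D}_N$ from $\mathcal{D}_C$ is \emph{needed}: if $\lambda_{t,j}$ were estimated from the calibration trajectories themselves, the scores $\{\ncscore^{(i)}\}$ would no longer be exchangeable with $\ncscore^{(0)}$ and the split-CP guarantee would fail.
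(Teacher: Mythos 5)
Your proposal is correct and follows essentially the same three-step argument as the paper's proof: deterministic containment of $\hat{x}_t^{(0)}$ in the KRS via Lemma~\ref{lem:koopman_containment}, the split-CP coverage event $|e_{t,j}^{(0)}|\le \bar e_{t,j}$ for all $(t,j)$ holding with probability at least $1-\delta$, and the Minkowski-sum combination through $x_t^{(0)}=\hat{x}_t^{(0)}+e_t^{(0)}$. Your added remarks on why the max-scaled scalar score avoids a Bonferroni penalty and why $\mathcal{D}_N$ must be disjoint from $\mathcal{D}_C$ are correct and, if anything, make the exchangeability step more explicit than the paper's (which argues it for a fixed shared reference, whereas you handle the random-reference case that the theorem statement actually quantifies over).
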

\endgroup

\paragraph{Offline Calibration}
As discussed so far, the method in Sec. \ref{sec:conformalize_KRS} is reference-specific, as the calibration trajectories $\mathcal{D}_C$ and error bounds $\bar e_t$ hold specifically for a single reference $x_{0:T}^{\text{ref}}$. We propose an offline approach that pre-computes a single, global set of error bounds $\bar e_{t,j}$ valid for any reference $x_{0:T}^{\text{ref}}$ sampled from the planner $P$. This is done by building a larger, more diverse calibration set. We first sample $N^{\text{ref}}$ reference trajectories $\{x_{0:T}^{\text{ref},(i)}\}_{i=1}^{N^{\text{ref}}} \sim P$. For each reference, we generate a calibration trajectory by sampling $x_0^{(i)} \sim \mathcal{B}_\epsilon(x_{0}^{ \text{ref},(i)})$, yielding an error trajectory $e_t^{(i)} = x_t^{(i)} - \hat x_{t}^{(i)}$, where $x_t^{(i)}$ and $\hat x_{t}^{(i)}$ are as defined in \eqref{eq:latent_rollout} and \eqref{eq:actual_rollout}, respectively. The SNSA procedure (computing $\ncscore^{(i)}$, $C$, and $\bar e_{t,j}$ via $\lambda_{t,j}$) is then performed exactly once on this entire aggregated set.

At runtime, given a new reference $x_{0:T}^{\text{ref}, (0)}$ sampled from $P$, two efficient computations are performed: 1) we compute the nominal KRS $\hat{\overline{\mathcal{X}}}_{0:T}$ specific to $x_{0:T}^{\text{ref}, (0)}$ via \eqref{eq:Koopman_RSOA}), and 2) we inflate this set using the pre-computed offline bounds $\bar e_t$ via the Minkowski sum $\overline{\mathcal{X}}_{t}^{1-\delta} = \hat{\overline{\mathcal{X}}}_{t} \oplus \textrm{diag}(\bar e_{t})\mathcal{B}_1(0)$, as defined in \eqref{eq:conformal_inflation}. This enables fast reachable set computation at runtime but may increase conservativeness of the resulting CKRS, since the bounds $\bar e_t$ must now hold uniformly for all references drawn from $P$, rather than for a single fixed reference. The validity of this offline approach relies on the exchangeability of the non-conformity scores, which is met since each score $\ncscore^{(i)}$ is a deterministic function of the data pair $(x_{0:T}^{\text{ref},(i)}, x_0^{(i)})$. Each of these pairs is sampled i.i.d. from the joint distribution defined by $x_{0:T}^{\text{ref}} \sim P$ and $x_0 \sim \mathcal{B}_\epsilon(x_{0}^{\text{ref}})$. Hence, the resulting scores are i.i.d. and thus exchangeable, ensuring that the standard CP guarantee $\mathbb{P}\big( \bigwedge_{t=0}^T x_t^{(0)} \in \overline{\mathcal{X}}_t^{1-\delta} \big) \ge 1-\delta$ holds.

\section{Experiments}
\label{sec:experiments}
\looseness-1We first compare our method against reachability baselines and then evaluate it on reachability analysis for a unicycle (3D), planar quadcopter (6D), 3D quadcopter (12D), MuJoCo \citep{todorov2012mujoco} hopper (11D), and MuJoCo swimmer (28D). All experiments are run on a desktop with an Intel i9-14900K CPU, 64\,GB RAM, and an NVIDIA RTX 4090 GPU. For the MuJoCo examples, we set the planner $P$ to be an expert PPO \citep{schulman2017proximalpolicyoptimizationalgorithms} policy; for the other systems, we use a trajectory optimizer on the analytical dynamics to serve as $P$. We report four key metrics: (i) \textit{reachability runtime} (s), where lower is better; (ii) \textit{average log-volume of reachable sets}, where lower indicates less conservativeness; (iii) \textit{safety rate over rollouts}, where higher is better; and (iv) \textit{coverage variability for CP-based verification}, where lower Beta-posterior variance of empirical coverage across calibration sizes indicates more stable coverage. Some metrics apply only to specific baselines (e.g., safety filters report safety rate only). Comprehensive results are presented in Table~\ref{tab:result_tab}.

\paragraph{Baselines}
\looseness-1We select baselines to highlight four gaps in data-driven reachability. (i) Learned reachability estimators \citep{deepreach} lack overapproximation guarantees, motivating formal NN verification. (ii) Existing hybrid CP-NNV (CP-NNV) \citep{cp_baseline} methods struggle to scale to high-dimensional, nonlinear dynamics and long horizons, motivating the Koopman lift. (iii) Pure CP-based reachability approaches \citep{cp_survey} suffer coverage degradation as calibration data or sample budgets shrink, motivating sample-efficient reachability analysis that remains reliable under limited data. (iv) Pure NNV-based RSOA computation \citep{sage-mp} provides deterministic guarantees but can be slow, motivating fast Koopman-based linear reachability.

\noindent\textit{\myuline{Unsafe behavior from learned reachability estimators.} } Under unicycle dynamics \eqref{eq:unicycle}, a value-function-based safety filter \citep{deepreach} -- an approximate reachability method -- fails to consistently enforce reach-avoid safety, with only 70\% of 250 rollouts satisfying the constraint (Fig. \ref{fig:deepreach}). Because the learned value function is an approximate neural PDE solution, small errors in the value and its gradient can cause safety violations, and the method lacks formal guarantees. In contrast, our method provides calibrated probabilistic guarantees that the closed-loop system remains in the CKRS computed via \eqref{eq:conformal_inflation} (100\% in our experiments) while preserving goal-reaching.

\noindent \textit{\myuline{Pure CP based verification.} }
Because our learned Koopman lift closely matches the training dynamics, the resulting KRS are already tight; conformalization then serves only to absorb the small residual model error. We compare to a pure SNSA baseline \citep{cp_survey}, which aggregates all heterogeneous trajectory and state errors into a single nonconformity score. As the horizon or state dimension grows, this forces the baseline to either inflate its sets or require much larger calibration sets to achieve the target $\delta$, causing it to suffer from slow convergence to the target coverage level. At practical data budgets (e.g., 1000 samples), a large coverage variance remains, e.g., as much as 5\% under-coverage of the true system trajectories (Fig. \ref{fig:pure_cp}), which can compromise system safety. In contrast, our method attains $100\%$ empirical coverage across all tested calibration set sizes, avoiding this variance-data trade-off and achieving target coverage with far fewer samples.

\begin{figure}[ht]
    \centering
    \includegraphics[width=\linewidth]{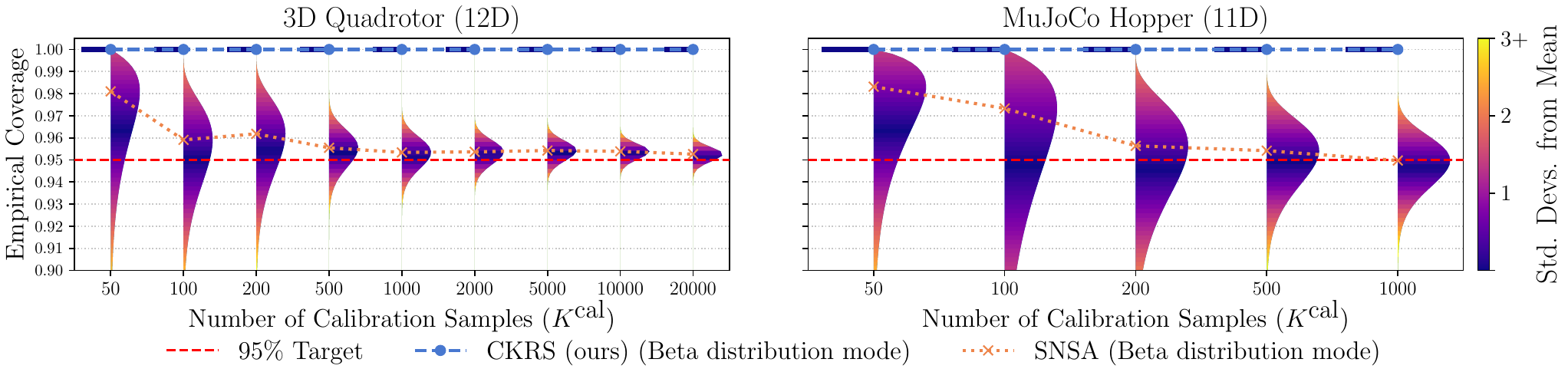}
    \caption{\looseness-1To visualize empirical reachable set coverage, we compute Beta posteriors for the SNSA baseline and ScaRe-Kro on the 3D quadcopter (left) and Hopper (right). The dotted lines indicate the mode of the Beta posterior. For SNSA, coverage rate slowly converges to the target $1-\delta$ as calibration dataset size increases, whereas ScaRe-Kro maintains $100\%$ coverage even for small $K^\textrm{cal}$.}
    \label{fig:pure_cp}
\end{figure}

\noindent \textit{\myuline{Hybrid Reachability methods (CP+NNV).} }
\looseness-1For a 3D quadcopter \eqref{eq:3D_Quad_dyn}, we compare with a hybrid data-driven/NNV reachability method \citep{cp_baseline} (denoted CP-NNV), that computes an RSOA by applying NNV on a learned NN trajectory predictor and using CP to bound RSOA underapproximation error induced by learning error. As shown in Fig.~\ref{fig:cpnnv}, CP-NNV yields much larger reachable sets (avg. log volume: -23.35), especially in attitude and velocity. This reflects difficulties in scaling NNV and CP to high-dimensional state spaces and long horizons, which we address with Koopman operators. Our method produces tighter RSOAs (avg. log volume: -40.66) while maintaining coverage, improving scalability with horizon and reducing conservativeness.

\noindent \textit{\myuline{Pure NNV Reachability.} }
On a unicycle model, we compare with our prior work \citep{sage-mp}, which used NNV tools for one-shot RSOA computation, requiring $51.95$s and achieving an average log-volume of $-3.98$. In contrast, our method computes the RSOA in $0.32$s with a log-volume of $-11.54$ (Table \ref{tab:result_tab}). This highlights that our method is much faster and also reduces conservativeness.

\begingroup
\setlength{\textfloatsep}{4pt}
\setlength{\intextsep}{4pt}
\setlength{\floatsep}{4pt}
\setlength{\abovecaptionskip}{2pt}
\setlength{\belowcaptionskip}{0pt}
\renewcommand{\arraystretch}{0.5}   
\setlength{\tabcolsep}{3pt}         
\setlength{\aboverulesep}{0pt}
\setlength{\belowrulesep}{0pt}
\setlength{\cmidrulesep}{2pt}

\newcommand{\thdr}[1]{\smash[b]{\shortstack{#1}}}

\begin{table}[!t]
\centering
\scriptsize
\caption{Numerical RSOA results for SCaRe-Kro on unicycle, quadcopters and MuJoCo models.}
\label{tab:result_tab}
\begin{tabular}{lcccccccccc}
\toprule
\textbf{Experiment} &
\textbf{\thdr{System\\dimension}} &
\textbf{\thdr{Time\\steps}} &
\textbf{\thdr{Confidence\\Level\\$(1-\delta)$}} &
\textbf{\thdr{CP\\time}} &
\textbf{\thdr{KRS\\time}} &
\textbf{\thdr{Total\\time\\(online)}} &
\textbf{\thdr{Avg. Log\\volume\\(online)}} &
\textbf{\thdr{Avg. Log\\volume\\(offline)}} &
\textbf{\thdr{Calibration\\dataset\\size}} &
\textbf{\thdr{Empirical\\coverage}} \\
\midrule
Unicycle & 3  & $100$ & $99.0\%$ & $0.038$s & $0.277$s & $0.315$s & $-11.543$ & $-8.690$ & $100$  &  $100\%$\\
2D Quad  & 6  & $100$ & $99.0\%$ & $0.055$s & $0.349$s & $0.404$s & $-21.491$ & $-5.333$  & $100$ & $100\%$ \\
3D Quad  & 12 & $200$ & $99.0\%$ & $0.175$s & $0.500$s & $0.675$s & $-42.954$ & $-35.275$ & $100$ & $100\%$ \\
Hopper   & 11 & $225$ & $99.0\%$ & $10.597$s & $0.387$s & $10.983$s & $-4.969$ & $-1.806$ & $100$ & $100\%$ \\
Swimmer  & 28 & $400$ & $95.0\%$ & $20.818$s & $0.593$s & $21.411$s & $32.595$ & $42.617$ & $100$ & $99.5\%$ \\
\bottomrule
\end{tabular}
\end{table}

\endgroup

\paragraph{Unicycle}
\looseness-1
We evaluate on CKRS computations on a unicycle model \eqref{eq:unicycle}, where $\enc$ and $\dec$ have 3 hidden layers with 128 neurons each, and $z \in \mathbb{R}^{10}$. As shown in Fig.~\ref{fig:car}, the certified reachable sets verify both safety and goal attainment. Full metrics are in Table~\ref{tab:result_tab}. We view this benchmark as a sanity check on a simple system, paving the way for more substantive scalability results.

\paragraph{Planar Quadcopter}
\looseness-1We evaluate a 6-state planar quadcopter \eqref{eq:2dquad} navigating around a circular obstacle for $T=100$. Here, $\enc$ and $\dec$ have 3 hidden layers with 128 neurons each, and $z \in \mathbb{R}^{24}$. We visualize offline-calibrated CKRSs for multiple reference trajectories in Fig.~\ref{fig:2d_quad}; metrics are in Table~\ref{tab:result_tab}. An NNV-only baseline \citep{sage-mp} required $59.85\,\mathrm{s}$ and achieved an average log-volume of $-18.709$. Overall, these results support our claim of reduced RSOA conservativeness. 

\paragraph{3D Quadcopter}
\looseness-1
\label{sec:3dquad}
We evaluate an analytical 12-state quadcopter \eqref{eq:3D_Quad_dyn} \citep{sabatino2015quadrotor} navigating around a $1\,\mathrm{m}$ radius obstacle for $T=200$. Here, $\enc$ and $\dec$ have 3 hidden layers with 128 neurons each, and $z \in \mathbb{R}^{24}$. The computed CKRS (Fig.~\ref{fig:3d_quad}) verifies safety and goal reaching; metrics are in Table~\ref{tab:result_tab}. To show that the offline CP inflation generalizes across controllers, we compute the CKRS for two references (Fig. \ref{fig:3d_quad}, one orange, one blue) using the same offline-calibrated bounds; we visualize additional CKRSs in Fig. \ref{fig:3d_quad_alt}. An NNV-only baseline \citep{sage-mp} is much slower and more conservative, requiring $2153.45\,\mathrm{s}$ with an average log-volume of $-30.32$. Overall, these results show our method's efficiency, tightness, and offline CP generalization for high-dimensional systems.

\begin{figure}[ht]
    \centering
    \includegraphics[width=\linewidth]{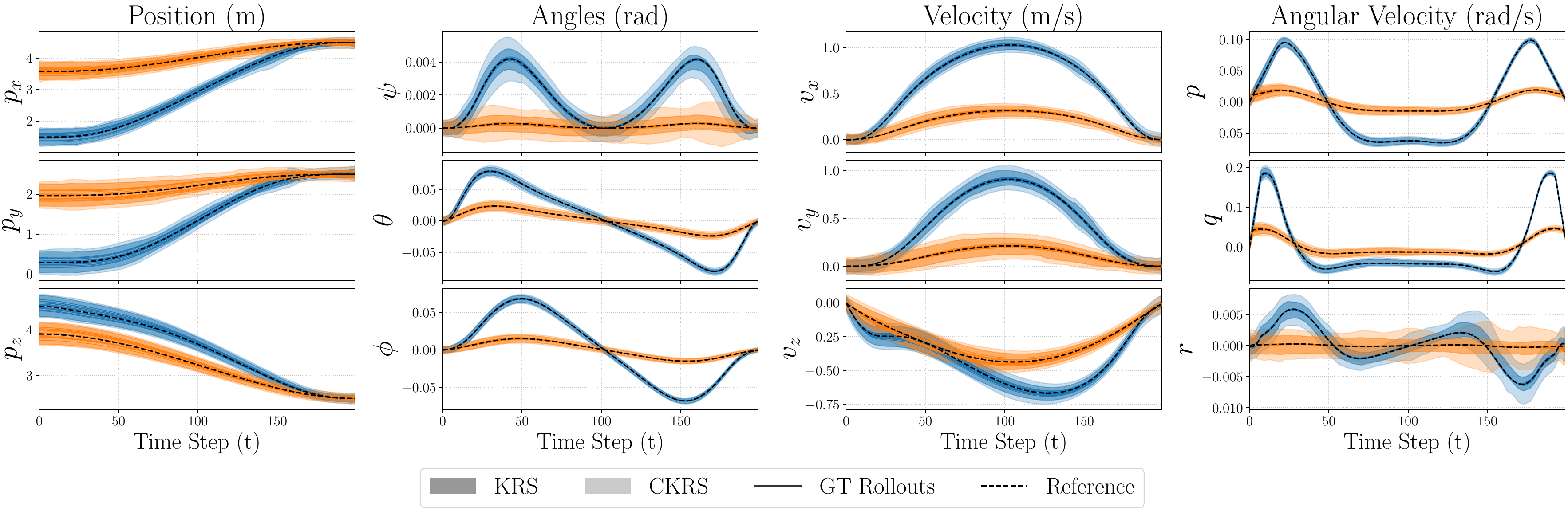}
    \caption{CKRS and KRS (pre-CP) computed for 3D quadcopter, plotted for each dimension.}
    \label{fig:3d_quad}
\end{figure}

\paragraph{MuJoCo: Hopper}
\looseness-1
We evaluate on MuJoCo’s Hopper-v5 \citep{todorov2012mujoco}, a nonlinear, underactuated system with black-box dynamics. A PPO agent (Stable-Baselines3, default hyperparameters) trained for $2\times 10^6$ steps is used as the planner $P$ that generates trajectories for the training dataset. Here, $\enc$ and $\dec$ have 3 hidden layers with 128 neurons each, and $z \in \mathbb{R}^{24}$. Numerical results are in Table~\ref{tab:result_tab} and RSOA slices for a few state dimensions are shown in Fig.~\ref{fig:hopper} (see Fig. \ref{fig:hopper_exp} for all 11 dimensions). This experiment highlights that our approach scales to black-box, high-dimensional contact-rich dynamics while maintaining low runtimes and probabilistic coverage guarantees.

\paragraph{MuJoCo: Customized Swimmer}
\looseness-1
We further assess scalability on a customized MuJoCo Swimmer-v5 \citep{todorov2012mujoco} with extended links, yielding a 28-dimensional state. The planner $P$ is a PPO agent (Stable-Baselines3 MLP) trained for $1\times10^{7}$ steps with modified hyperparameters (LR $1\times10^{-3}$, $\gamma=0.9999$, batch size $512$; others default). Here, $\enc$ and $\dec$ have 3 hidden layers with 128 neurons each, and $z \in \mathbb{R}^{38}$. Numerical results are in Table~\ref{tab:result_tab} and RSOA slices for a few state dimensions are shown in Fig.~\ref{fig:hopper} (see Fig. \ref{fig:swimmer} for all 28 dimensions). This experiment indicates that our approach maintains certified probabilistic safety while tractably computed RSOAs for high-dimensional systems over long horizons for black-box dynamical systems in MuJoCo.

\vspace{-16pt}
\section{Conclusion}
\label{conclusion}

We present ScaRe-Kro, a method for scalable reachability analysis of complex, nonlinear black-box dynamics that provides formal coverage guarantees. Our approach learns a lifted linear representation of the nonlinear dynamics through data, which facilitates tracking control design and the efficient propagation of closed-loop reachable sets. We then ``conformalize" this reachable set by inflating it such that rollouts of the target nonlinear system are guaranteed to stay within the inflated set with a specified probability (i.e., at least $1-\delta$). Across several benchmarks, including a unicycle, planar/3D quadrotors, MuJoCo hopper, and a 13-link MuJoCo swimmer, our method computes reachable sets efficiently and over long time horizons. Future work aims to: 1) remove the dependence on reference trajectories and 2) replace LQR with a robust control policy that satisfies constraints despite model error to increase the time horizon over which the system can be verified.

\bibliography{references}

\begin{thebibliography}{40}
\providecommand{\natexlab}[1]{#1}
\providecommand{\url}[1]{\texttt{#1}}
\expandafter\ifx\csname urlstyle\endcsname\relax
  \providecommand{\doi}[1]{doi: #1}\else
  \providecommand{\doi}{doi: \begingroup \urlstyle{rm}\Url}\fi

\bibitem[Althoff(2015)]{DBLP:conf/cpsweek/Althoff15}
Matthias Althoff.
\newblock An introduction to {CORA} 2015.
\newblock In \emph{ARCH@CPSWeek}, volume~34 of \emph{EPiC Series in Computing}, pages 120--151, 2015.

\bibitem[Althoff et~al.(2021)Althoff, Frehse, and Girard]{DBLP:journals/arcras/AlthoffFG21}
Matthias Althoff, Goran Frehse, and Antoine Girard.
\newblock Set propagation techniques for reachability analysis.
\newblock \emph{Annu. Rev. Control. Robotics Auton. Syst.}, 2021.

\bibitem[Amini et~al.(2021)Amini, Sun, and Motee]{doi:10.1137/1.9781611976847.1}
Arash Amini, Qiyu Sun, and Nader Motee.
\newblock \emph{Error Bounds for Carleman Linearization of General Nonlinear Systems}, pages 1--8.
\newblock 2021.

\bibitem[Bak et~al.(2019)Bak, Tran, and Johnson]{10.1145/3302504.3311792}
Stanley Bak, Hoang-Dung Tran, and Taylor~T. Johnson.
\newblock Numerical verification of affine systems with up to a billion dimensions.
\newblock In \emph{Proceedings of the 22nd ACM International Conference on Hybrid Systems: Computation and Control}, HSCC '19, page 23–32, 2019.

\bibitem[Bak et~al.(2021)Bak, Bogomolov, Duggirala, Gerlach, and Potomkin]{DBLP:conf/adhs/BakBDGP21}
Stanley Bak, Sergiy Bogomolov, Parasara~Sridhar Duggirala, Adam~R. Gerlach, and Kostiantyn Potomkin.
\newblock Reachability of black-box nonlinear systems after koopman operator linearization.
\newblock In \emph{{IFAC}}, volume~54, pages 253--258, 2021.

\bibitem[Bak et~al.(2025)Bak, Bogomolov, Hencey, Kochdumper, Lew, and Potomkin]{DBLP:journals/fmsd/BakBHKLP25}
Stanley Bak, Sergiy Bogomolov, Brandon Hencey, Niklas Kochdumper, Ethan Lew, and Kostiantyn Potomkin.
\newblock Reachability of koopman linearized systems using explicit kernel approximation and polynomial zonotope refinement.
\newblock \emph{Formal Methods Syst. Des.}, 66\penalty0 (2):\penalty0 307--333, 2025.

\bibitem[Bansal and Tomlin(2021)]{deepreach}
Somil Bansal and Claire~J. Tomlin.
\newblock Deepreach: {A} deep learning approach to high-dimensional reachability.
\newblock In \emph{ICRA}, pages 1817--1824, 2021.

\bibitem[Bansal et~al.(2017)Bansal, Chen, Herbert, and Tomlin]{bansal2017HJReach}
Somil Bansal, Mo~Chen, Sylvia Herbert, and Claire~J. Tomlin.
\newblock Hamilton-jacobi reachability: A brief overview and recent advances, 2017.

\bibitem[Bayer and Leine(2023)]{Bayer2023}
Fabia Bayer and Remco~I. Leine.
\newblock Sorting-free hill-based stability analysis of periodic solutions through koopman analysis.
\newblock \emph{Nonlinear Dynamics}, 111\penalty0 (9):\penalty0 8439--8466, May 2023.

\bibitem[Chen et~al.(2023)Chen, Preciado, and Fazlyab]{Chen2023One-Shot}
Shaoru Chen, Victor~M. Preciado, and Mahyar Fazlyab.
\newblock One-shot reachability analysis of neural network dynamical systems.
\newblock In \emph{{ICRA}}, 2023.

\bibitem[Chilakamarri et~al.(2024)Chilakamarri, Feng, and Bansal]{reachblackboxdynamical}
Vamsi~Krishna Chilakamarri, Zeyuan Feng, and Somil Bansal.
\newblock Reachability analysis for black-box dynamical systems, 2024.

\bibitem[Chou et~al.(2021)Chou, Ozay, and Berenson]{DBLP:conf/cdc/ChouOB21}
Glen Chou, Necmiye Ozay, and Dmitry Berenson.
\newblock Model error propagation via learned contraction metrics for safe feedback motion planning of unknown systems.
\newblock In \emph{2021 60th {IEEE} Conference on Decision and Control (CDC), Austin, TX, USA, December 14-17, 2021}, pages 3576--3583, 2021.

\bibitem[Chou et~al.(2022)Chou, Ozay, and Berenson]{DBLP:conf/wafr/ChouOB22}
Glen Chou, Necmiye Ozay, and Dmitry Berenson.
\newblock Safe output feedback motion planning from images via learned perception modules and contraction theory.
\newblock In \emph{{WAFR}}, volume~25 of \emph{Springer Proceedings in Advanced Robotics}, pages 349--367, 2022.

\bibitem[Dietrich et~al.(2024)Dietrich, Devonport, and Arcak]{pmlr-v242-dietrich24a}
Elizabeth Dietrich, Alex Devonport, and Murat Arcak.
\newblock Nonconvex scenario optimization for data-driven reachability.
\newblock In \emph{Proceedings of the 6th Annual Learning for Dynamics and Control Conference}, volume 242, pages 514--527, 2024.

\bibitem[Everett et~al.(2021)Everett, Habibi, Sun, and How]{Everett2021NFL}
Michael Everett, Golnaz Habibi, Chuangchuang Sun, and Jonathan~P. How.
\newblock Reachability analysis of neural feedback loops.
\newblock \emph{{IEEE} Access}, 9, 2021.

\bibitem[Forets and Pouly(2017)]{DBLP:journals/corr/abs-1711-02552}
Marcelo Forets and Amaury Pouly.
\newblock Explicit error bounds for carleman linearization.
\newblock 2017.

\bibitem[Forets and Schilling(2021)]{DBLP:conf/rp/ForetsS21}
Marcelo Forets and Christian Schilling.
\newblock Reachability of weakly nonlinear systems using carleman linearization.
\newblock In \emph{{RP}}, volume 13035, pages 85--99, 2021.

\bibitem[Hashemi et~al.(2024)Hashemi, Lindemann, and Deshmukh]{cp_baseline}
Navid Hashemi, Lars Lindemann, and Jyotirmoy~V. Deshmukh.
\newblock Statistical reachability analysis of stochastic cyber-physical systems under distribution shift.
\newblock \emph{{IEEE} Trans. Comput. Aided Des. Integr. Circuits Syst.}, 43\penalty0 (11):\penalty0 4250--4261, 2024.

\bibitem[Herbert et~al.(2017)Herbert, Chen, Han, Bansal, Fisac, and Tomlin]{DBLP:conf/cdc/HerbertCHBFT17}
Sylvia~L. Herbert, Mo~Chen, SooJean Han, Somil Bansal, Jaime~F. Fisac, and Claire~J. Tomlin.
\newblock Fastrack: {A} modular framework for fast and guaranteed safe motion planning.
\newblock In \emph{{CDC}}, pages 1517--1522, 2017.

\bibitem[Jafarpour et~al.(2024)Jafarpour, Harapanahalli, and Coogan]{Jafarpour2024AkashNFL}
Saber Jafarpour, Akash Harapanahalli, and Samuel Coogan.
\newblock Efficient interaction-aware interval analysis of neural network feedback loops.
\newblock \emph{TAC}, 2024.

\bibitem[Knuth et~al.(2021)Knuth, Chou, Ozay, and Berenson]{DBLP:journals/ral/KnuthCOB21}
Craig Knuth, Glen Chou, Necmiye Ozay, and Dmitry Berenson.
\newblock Planning with learned dynamics: Probabilistic guarantees on safety and reachability via lipschitz constants.
\newblock \emph{{IEEE} Robotics Autom. Lett.}, 2021.

\bibitem[Knuth et~al.(2023)Knuth, Chou, Reese, and Moore]{10161001}
Craig Knuth, Glen Chou, Jamie Reese, and Joseph Moore.
\newblock Statistical safety and robustness guarantees for feedback motion planning of unknown underactuated stochastic systems.
\newblock In \emph{ICRA}, 2023.

\bibitem[Kochdumper and Bak(2022)]{DBLP:conf/cdc/KochdumperB22}
Niklas Kochdumper and Stanley Bak.
\newblock Conformant synthesis for koopman operator linearized control systems.
\newblock In \emph{{CDC}}, pages 7327--7332, 2022.

\bibitem[Koopman and Neumann(1932)]{koopman_theory}
B.~O. Koopman and J.~V. Neumann.
\newblock Dynamical systems of continuous spectra.
\newblock In \emph{Proceedings of the National Academy of Sciences}, 1932.

\bibitem[Lew et~al.(2022)Lew, Janson, Bonalli, and Pavone]{pmlr-v168-lew22a}
Thomas Lew, Lucas Janson, Riccardo Bonalli, and Marco Pavone.
\newblock A simple and efficient sampling-based algorithm for general reachability analysis.
\newblock In Roya Firoozi, Negar Mehr, Esen Yel, Rika Antonova, Jeannette Bohg, Mac Schwager, and Mykel Kochenderfer, editors, \emph{L4DC}, volume 168, pages 1086--1099, 23--24 Jun 2022.

\bibitem[Lindemann et~al.(2024)Lindemann, Zhao, Yu, Pappas, and Deshmukh]{cp_survey}
Lars Lindemann, Yiqi Zhao, Xinyi Yu, George~J. Pappas, and Jyotirmoy~V. Deshmukh.
\newblock Formal verification and control with conformal prediction.
\newblock 2024.

\bibitem[Liu et~al.(2021)Liu, Kolden, Krovi, Loureiro, Trivisa, and Childs]{DBLP:journals/pnas/LiuKKLTC21}
Jin{-}Peng Liu, Herman~{\O}ie Kolden, Hari~K. Krovi, Nuno~F. Loureiro, Konstantina Trivisa, and Andrew~M. Childs.
\newblock Efficient quantum algorithm for dissipative nonlinear differential equations.
\newblock \emph{Proc. Natl. Acad. Sci. {USA}}, 118\penalty0 (35):\penalty0 e2026805118, 2021.

\bibitem[Nath et~al.(2025)Nath, Yin, and Chou]{sage-mp}
Devesh Nath, Haoran Yin, and Glen Chou.
\newblock Formal safety verification and refinement for generative motion planners via certified local stabilization.
\newblock 2025.

\bibitem[Pan et~al.(2024)Pan, Kaiser, de~Silva, Kutz, and Brunton]{pykoopman}
Shaowu Pan, Eurika Kaiser, Brian~M. de~Silva, J.~Nathan Kutz, and Steven~L. Brunton.
\newblock Pykoopman: A python package for data-driven approximation of the koopman operator.
\newblock \emph{Journal of Open Source Software}, 9\penalty0 (94):\penalty0 5881, 2024.
\newblock \doi{10.21105/joss.05881}.

\bibitem[Park et~al.(2024)Park, Vijay, and Hwang]{10772635datadrivenreach}
Hyunsang Park, Vishnu Vijay, and Inseok Hwang.
\newblock Data-driven reachability analysis for nonlinear systems.
\newblock \emph{IEEE Control Systems Letters}, 8:\penalty0 2661--2666, 2024.

\bibitem[Rober and How(2024)]{Rober2024CARV}
Nicholas Rober and Jonathan~P. How.
\newblock Constraint-aware refinement for safety verification of neural feedback loops.
\newblock \emph{{IEEE} Control. Syst. Lett.}, 2024.

\bibitem[Sabatino(2015)]{sabatino2015quadrotor}
F.~Sabatino.
\newblock Quadrotor control: Modeling, nonlinear control design, and simulation.
\newblock M.s. thesis, KTH, Stockholm, Sweden, 2015.

\bibitem[Schulman et~al.(2017)Schulman, Wolski, Dhariwal, Radford, and Klimov]{schulman2017proximalpolicyoptimizationalgorithms}
John Schulman, Filip Wolski, Prafulla Dhariwal, Alec Radford, and Oleg Klimov.
\newblock Proximal policy optimization algorithms, 2017.

\bibitem[Shi et~al.(2024)Shi, Haseli, Mamakoukas, Bruder, Abraham, Murphey, Cort{\'{e}}s, and Karydis]{koop_survey}
Lu~Shi, Masih Haseli, Giorgos Mamakoukas, Daniel Bruder, Ian Abraham, Todd~D. Murphey, Jorge Cort{\'{e}}s, and Konstantinos Karydis.
\newblock Koopman operators in robot learning.
\newblock 2024.

\bibitem[Singh et~al.(2018)Singh, Chen, Herbert, Tomlin, and Pavone]{DBLP:conf/wafr/SinghCHTP18}
Sumeet Singh, Mo~Chen, Sylvia~L. Herbert, Claire~J. Tomlin, and Marco Pavone.
\newblock Robust tracking with model mismatch for fast and safe planning: An {SOS} optimization approach.
\newblock In \emph{{WAFR}}, volume~14 of \emph{Springer Proceedings in Advanced Robotics}, pages 545--564, 2018.

\bibitem[Sun and Mitra(2022)]{neureach}
Dawei Sun and Sayan Mitra.
\newblock Neureach: Learning reachability functions from simulations.
\newblock In Dana Fisman and Grigore Rosu, editors, \emph{Tools and Algorithms for the Construction and Analysis of Systems}, pages 322--337, Cham, 2022. Springer International Publishing.

\bibitem[Thapliyal and Hwang(2022)]{DBLP:journals/access/ThapliyalH22}
Omanshu Thapliyal and Inseok Hwang.
\newblock Approximate reachability for koopman systems using mixed monotonicity.
\newblock \emph{{IEEE} Access}, 10:\penalty0 84754--84760, 2022.

\bibitem[Todorov et~al.(2012)Todorov, Erez, and Tassa]{todorov2012mujoco}
Emanuel Todorov, Tom Erez, and Yuval Tassa.
\newblock Mujoco: A physics engine for model-based control.
\newblock In \emph{IROS}, pages 5026--5033, 2012.

\bibitem[Xu et~al.(2020)Xu, Shi, Zhang, Wang, Chang, Huang, Kailkhura, Lin, and Hsieh]{Xu2020AutoLiRPA}
Kaidi Xu, Zhouxing Shi, Huan Zhang, Yihan Wang, Kai{-}Wei Chang, Minlie Huang, Bhavya Kailkhura, Xue Lin, and Cho{-}Jui Hsieh.
\newblock Automatic perturbation analysis for scalable certified robustness and beyond.
\newblock In \emph{NeurIPS}, 2020.

\bibitem[Zhang et~al.(2018)Zhang, Weng, Chen, Hsieh, and Daniel]{Zhang2018CROWN}
Huan Zhang, Tsui{-}Wei Weng, Pin{-}Yu Chen, Cho{-}Jui Hsieh, and Luca Daniel.
\newblock Efficient neural network robustness certification with general activation functions.
\newblock In \emph{NeurIPS}, pages 4944--4953, 2018.

\end{thebibliography}
\newpage
\section{Appendix}
\label{Appendix}

We provide proofs for our theoretical results (App. \ref{app:proofs}), an algorithm block detailing our method (App. \ref{app:algorithm}), and additional experiments and experimental details (App. \ref{app:parameters}).
\subsection{Proofs}\label{app:proofs}
\setcounter{theorem}{0}
\begin{lemma}[KRS overapproximation]
\label{lem:koopman_containment}
Let $\tnfl$ denote the closed-loop CG of TKFL, and let $\mathcal{X}_0$ be the initial set. Suppose the KRS $\hat{\overline{\mathcal{X}}}_{0:T}$ is defined as in \eqref{eq:Koopman_RSOA}. Then, for any initial state $x_0 \in \mathcal{X}_0$, the decoded closed-loop trajectory \eqref{eq:latent_rollout_tnfl}, i.e., 
$\hat{x}_{0:T} = \tnfl(x_0, \zref_{0:T}, \uref_{0:T-1})$ satisfies $\hat{x}_t \in \hat{\mathcal{X}}_t$ for all $t \in \mathcal{T}$.
\end{lemma}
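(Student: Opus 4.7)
The plan is to apply Proposition \ref{thm:cg_robustness} (CG Robustness) to the TKFL computational graph $\tnfl$, treating the reference quantities $\zref_{0:T}$ and $\uref_{0:T-1}$ as \emph{fixed} inputs so that the only varying argument is the initial state $x_0 \in \mathcal{X}_0$ (which is assumed throughout the paper to be an interval, so Proposition \ref{thm:cg_robustness} applies). First, I would invoke Proposition \ref{thm:cg_robustness} to produce componentwise affine bounding functions $\underline{\tnfl}(\cdot)$ and $\overline{\tnfl}(\cdot)$ satisfying
\[
\underline{\tnfl}(x_0,\uref_{0:T-1},\zref_{0:T}) \;\le\; \tnfl(x_0,\uref_{0:T-1},\zref_{0:T}) \;\le\; \overline{\tnfl}(x_0,\uref_{0:T-1},\zref_{0:T})
\]
elementwise for every $x_0 \in \mathcal{X}_0$. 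Then, restricting to the output coordinates of $\tnfl$ corresponding to timestep $t$ (which, by construction of the TKFL in \eqref{eq:latent_rollout_tnfl}, produce $\hat{x}_t$) yields $\underline{\tnfl}_t(x_0,\cdots) \le \hat{x}_t \le \overline{\tnfl}_t(x_0,\cdots)$ for every $x_0 \in \mathcal{X}_0$.

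Second, I would take the elementwise minimum of the lower bound and maximum of the upper bound over $x_0 \in \mathcal{X}_0$; these are exactly the interval endpoints used to define the KRS in \eqref{eq:Koopman_RSOA}. Concretely, for any particular $x_0 \in \mathcal{X}_0$ with decoded trajectory $\hat{x}_{0:T} = \tnfl(x_0,\uref_{0:T-1},\zref_{0:T})$, the chain of inequalities
\[
\min_{x_0'\in\mathcal{X}_0}\underline{\tnfl}_t(x_0',\cdots) \;\le\; \underline{\tnfl}_t(x_0,\cdots) \;\le\; \hat{x}_t \;\le\; \overline{\tnfl}_t(x_0,\cdots) \;\le\; \max_{x_0'\in\mathcal{X}_0}\overline{\tnfl}_t(x_0',\cdots)
\]
shows that $\hat{x}_t$ lies in the interval $\textsf{Int}(\min_{x_0'}\underline{\tnfl}_t,\ \max_{x_0'}\overline{\tnfl}_t) = \hat{\overline{\mathcal{X}}}_t$. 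Applying this argument across all $t \in \mathcal{T}$ gives the claimed containment for the whole decoded trajectory.

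The proof is essentially a routine application of Proposition \ref{thm:cg_robustness} once the right variables are fixed, so there is no deep obstacle; the only subtlety is bookkeeping. One must (i) carefully identify which output coordinates of the single computational graph $\tnfl$ correspond to $\hat{x}_t$ at each timestep so that the same affine bounds can be sliced per $t$, and (ii) justify that fixing $\zref_{0:T}$ and $\uref_{0:T-1}$ as constants does not break the hypotheses of Proposition \ref{thm:cg_robustness} — this is immediate because one can absorb them into the bias terms of the CG or equivalently restrict the input interval in those coordinates to a singleton. Given $\mathcal{X}_0$ is a hyperrectangle and $\underline{\tnfl}_t,\overline{\tnfl}_t$ are affine in $x_0$, the $\min$ and $\max$ in \eqref{eq:Koopman_RSOA} reduce to interval arithmetic on the box $\mathcal{X}_0$, matching exactly what \texttt{auto\_LiRPA} computes.
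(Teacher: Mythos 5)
Your proposal is correct and follows essentially the same route as the paper's proof: invoke Proposition \ref{thm:cg_robustness} on the TKFL computational graph to obtain sound affine bounds $\underline{\tnfl}_t, \overline{\tnfl}_t$ for each timestep's output, then chain $\min_{x_0'}\underline{\tnfl}_t \le \underline{\tnfl}_t(x_0,\cdots) \le \hat{x}_t \le \overline{\tnfl}_t(x_0,\cdots) \le \max_{x_0'}\overline{\tnfl}_t$ to conclude containment in $\hat{\overline{\mathcal{X}}}_t$. Your additional remarks on fixing the reference inputs and on the interval-arithmetic reduction over the box $\mathcal{X}_0$ are sound bookkeeping that the paper leaves implicit.
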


\begin{proof}
The closed-loop graph $\tnfl$ is a computational graph $G$ as defined in Proposition~\ref{thm:cg_robustness}. Using the \texttt{auto\_LiRPA} library~\cite{Xu2020AutoLiRPA}, we compute sound, vector-valued affine bounding functions $\underline{\tnfl}_t$ and $\overline{\tnfl}_t$ (corresponding to $\underline{G}$ and $\overline{G}$ in Proposition~\ref{thm:cg_robustness}) for the $t$-th state output.

By Proposition~\ref{thm:cg_robustness}, these bounds are guaranteed to be sound. That is, for any specific input $x_0 \in \mathcal{X}_0$, the predicted state 
$\hat{x}_t = \tnfl_t(x_0, \zref_{0:T}, \uref_{0:T-1})$ satisfies
\[
\underline{\tnfl}_t(x_0, \zref_{0:T}, \uref_{0:T-1}) 
\le \hat{x}_t 
\le \overline{\tnfl}_t(x_0, \zref_{0:T}, \uref_{0:T-1}).
\]

The KRS $\hat{\overline{\mathcal{X}}}_t$ is defined in \eqref{eq:Koopman_RSOA} as the interval $\textsf{Int}(\hat{x}_t^L, \hat{x}_t^U)$, where $\hat{x}_t^L = \min_{x_0' \in \mathcal{X}_0} \underline{\tnfl}_t(x_0', \dots)$ and $\hat{x}_t^U = \max_{x_0' \in \mathcal{X}_0} \overline{\tnfl}_t(x_0', \dots)$.
By the definitions of the minimum and maximum, we have
\begin{align*}
    \hat{x}_t 
    &\ge \underline{\tnfl}_t(x_0, \dots) 
     \ge \min_{x_0' \in \mathcal{X}_0} \underline{\tnfl}_t(x_0', \dots) 
     = \hat{x}_t^L,\\
    \hat{x}_t 
    &\le \overline{\tnfl}_t(x_0, \dots) 
     \le \max_{x_0' \in \mathcal{X}_0} \overline{\tnfl}_t(x_0', \dots) 
     = \hat{x}_t^U.
\end{align*}
Hence, $\hat{x}_t \in \textsf{Int}(\hat{x}_t^L, \hat{x}_t^U) = \hat{\overline{\mathcal{X}}}_t$. Since this holds for all $t \in \mathcal{T}$, the trajectory $\hat{x}_{0:T}$ is contained within $\hat{\overline{\mathcal{X}}}_{0:T}$.
\end{proof}

\setcounter{theorem}{0}
\begin{theorem}[CKRS Coverage Guarantee]
\label{thm:ckrs_coverage}
Let $\hat{\overline{\mathcal{X}}}_{0:T}$ be the KRS defined in \eqref{eq:Koopman_RSOA} and let $\bar e_t = [\bar e_{t,1}, \ldots, \bar e_{t,n}]^\top \in \mathbb{R}^n$. Define the conformalized Koopman reachable set (CKRS) as
\begin{equation}
\overline{\mathcal{X}}_{0:T}^{1-\delta} \equiv \bigl\{ \hat{\overline{\mathcal{X}}}_t \oplus \mathrm{diag}(\bar e_t) \mathcal{B}_1(0) \bigr\}_{t=0}^T,
\label{eq:conformal_inflation}
\end{equation}
where $\mathrm{diag}(\bar e_t) \in \mathbb{R}^{n\times n}$ is a diagonal matrix with $\bar e_t$ on its diagonal. Then, for a new reference trajectory $x_{0:T}^{\textrm{ref}, (0)}$ drawn from the same distribution, the CKRS contains true closed-loop system trajectory generated by \eqref{eq:closed_loop_dynamics} with probability at least $1-\delta$:
\[
\mathbb{P}\Big( \bigwedge_{t=0}^T x_t^{(0)} \in \overline{\mathcal{X}}_t^{1-\delta} \Big) \ge 1-\delta.
\]
\end{theorem}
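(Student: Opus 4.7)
The plan is to decompose the claim into a deterministic geometric containment step and a probabilistic step coming from split CP, then chain them with Lemma~\ref{lem:koopman_containment}. First, I would recall the CP construction of Sec.~\ref{sec:conformalize_KRS}: the nonconformity score is $R^{(i)} = \max_{t,j} \lambda_{t,j} |e_{t,j}^{(i)}|$ with $e_{t,j}^{(i)} = x_{t,j}^{(i)} - \hat{x}_{t,j}^{(i)}$, and $C$ is the $\lceil (K^{\text{cal}}+1)(1-\delta)\rceil / K^{\text{cal}}$ empirical quantile of $\{R^{(i)}\}_{i=1}^{K^{\text{cal}}} \cup \{\infty\}$. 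The normalization constants $\lambda_{t,j}$ are computed on the disjoint set $\mathcal{D}_N$ and hence are independent of the calibration scores. Since each calibration tuple $(x_{0:T}^{\text{ref},(i)}, x_0^{(i)})$ is drawn i.i.d.\ from the joint distribution of reference and perturbed initial state, and each $R^{(i)}$ is a deterministic function of that tuple (with $\lambda_{t,j}$ fixed), the scores $R^{(1)}, \dots, R^{(K^{\text{cal}})}, R^{(0)}$ are exchangeable. Standard split CP then yields $\mathbb{P}(R^{(0)} \le C) \ge 1-\delta$, which after unpacking the maximum and dividing through by $\lambda_{t,j}$ gives the joint dimension- and time-wise bound
\begin{equation*}
\mathbb{P}\bigl(|e_{t,j}^{(0)}| \le \bar{e}_{t,j},\ \forall t \in \mathcal{T},\ \forall j \in \{1,\dots,n\}\bigr) \ge 1-\delta,
\end{equation*}
exactly as asserted in the display preceding the theorem.

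Next, I would argue the deterministic containment step on the event $E \doteq \{|e_{t,j}^{(0)}| \le \bar{e}_{t,j}\ \forall t, j\}$. On $E$, for each $t$ the vector $x_t^{(0)} - \hat{x}_t^{(0)}$ lies in the axis-aligned box $\{v \in \mathbb{R}^n : |v_j| \le \bar{e}_{t,j}\}$, which is exactly $\mathrm{diag}(\bar{e}_t)\mathcal{B}_1(0)$ under the infinity-norm convention for $\mathcal{B}_1(0)$ fixed in the preliminaries. Simultaneously, by Lemma~\ref{lem:koopman_containment} applied to the (fixed) reference $(x_{0:T}^{\text{ref},(0)}, u_{0:T-1}^{\text{ref},(0)})$ and perturbed initial state $x_0^{(0)} \in \mathcal{B}_\epsilon(x_0^{\text{ref},(0)}) \subseteq \mathcal{X}_0$, the decoded closed-loop trajectory satisfies $\hat{x}_t^{(0)} \in \hat{\overline{\mathcal{X}}}_t$. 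Decomposing $x_t^{(0)} = \hat{x}_t^{(0)} + (x_t^{(0)} - \hat{x}_t^{(0)})$ and invoking the definition of Minkowski sum yields $x_t^{(0)} \in \hat{\overline{\mathcal{X}}}_t \oplus \mathrm{diag}(\bar{e}_t)\mathcal{B}_1(0) = \overline{\mathcal{X}}_t^{1-\delta}$ for every $t \in \mathcal{T}$.

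Finally, I would combine the two steps by monotonicity of probability: $E \subseteq \{\bigwedge_{t=0}^T x_t^{(0)} \in \overline{\mathcal{X}}_t^{1-\delta}\}$, so $\mathbb{P}(\bigwedge_t x_t^{(0)} \in \overline{\mathcal{X}}_t^{1-\delta}) \ge \mathbb{P}(E) \ge 1-\delta$. The main subtlety to get right is the exchangeability argument, since a careless reading could conflate the trajectory-level scores with per-timestep scores and attempt a union bound (which would be loose and would require the paper's $\delta$ to be split across $T(n)$ events). The single-score construction dodges this because the maximum is baked into $R^{(i)}$ itself, so one split-CP invocation covers the entire trajectory uniformly. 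A secondary subtlety is ensuring that the normalization constants $\lambda_{t,j}$ derived from $\mathcal{D}_N$ are treated as fixed before calibration, which preserves exchangeability of the calibration and test scores; this is guaranteed by the disjointness of $\mathcal{D}_N$ and $\mathcal{D}_C$ used in the offline protocol.
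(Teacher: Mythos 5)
Your proposal is correct and follows essentially the same three-step argument as the paper's proof: split-CP exchangeability of the single trajectory-level nonconformity score gives the uniform error-bound event with probability at least $1-\delta$, Lemma~\ref{lem:koopman_containment} gives deterministic containment of $\hat{x}_t^{(0)}$ in $\hat{\overline{\mathcal{X}}}_t$, and the decomposition $x_t^{(0)} = \hat{x}_t^{(0)} + e_t^{(0)}$ combined with the Minkowski sum yields the claim. The only (minor, favorable) difference is that you phrase exchangeability over the joint distribution of reference and perturbed initial state, which matches the theorem's ``new reference trajectory drawn from the same distribution'' and the offline-calibration protocol, whereas the paper's appendix proof fixes a single reference and randomizes only $x_0^{(i)}$; your explicit note that the normalization constants $\lambda_{t,j}$ come from the disjoint set $\mathcal{D}_N$ and are therefore fixed before calibration is a detail the paper's proof leaves implicit.
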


\begin{proof}
The proof combines a deterministic containment guarantee with a probabilistic coverage guarantee.

\noindent\textit{Step 1: KRS Containment.}  
By Lemma~\ref{lem:koopman_containment}, the decoded closed-loop lifted trajectory $\hat{x}_{t,j}^{(0)}$ is deterministically contained within the hyper-rectangular KRS:
\[
\hat{x}_{t,j}^{(0)} \in \hat{\overline{\mathcal{X}}}_{t,j} = \textsf{Int}(\hat{x}_{t,j}^L, \hat{x}_{t,j}^U), \quad \forall t \in \mathcal{T},\quad \forall j \in \{1,\ldots,n\}.
\]

\noindent\textit{Step 2: Probabilistic Coverage.}  
The CP guarantee requires exchangeability of the nonconformity scores $\{\ncscore^{(i)}\}_{i=0}^{K^{\mathrm{cal}}}$. In our framework, all calibration and test trajectories share the same reference $(z_{0:T}^{\mathrm{ref}}, u_{0:T-1}^{\mathrm{ref}})$ and controller $\lqrmat_t$, and their initial states $x_0^{(i)}$ are sampled i.i.d. from $\mathcal{B}_\epsilon(x_0^{\mathrm{ref}})$. Since the decoded closed-loop lifted trajectory \eqref{eq:latent_rollout}, true closed-loop trajectory \eqref{eq:actual_rollout}, and the nonconformity score are deterministic functions of $x_0^{(i)}$, the resulting scores $\ncscore^{(i)}$ are also i.i.d., and hence exchangeable. By the CP guarantee, the event
\[
\mathcal{E} \equiv \{ |e_{t,j}^{(0)}| \le \bar e_{t,j}, \forall t,j \}
\]
occurs with probability at least $1-\delta$.

\noindent\textit{Step 3: Combined Guarantee.}  
The true state can be decomposed as $x_{t,j}^{(0)} = \hat{x}_{t,j}^{(0)} + e_{t,j}^{(0)}$. When $\mathcal{E}$ occurs, we have
\[
\hat{x}_{t,j}^L - \bar e_{t,j} \le \hat{x}_{t,j}^{(0)} + e_{t,j}^{(0)} \le \hat{x}_{t,j}^U + \bar e_{t,j},
\]
which by definition implies $x_{t,j}^{(0)} \in \overline{\mathcal{X}}_{t,j}^{1-\delta}$. Since this holds for all $t$ and $j$, the trajectory $x_{0:T}^{(0)}$ is contained in the CKRS $\overline{\mathcal{X}}_{0:T}^{1-\delta}$ with probability at least $1-\delta$.
\end{proof}

\subsection{Algorithm}\label{app:algorithm}

Algorithm \ref{alg:online_ckrs} details the procedure for computing the CKRS. The process begins by lifting the planner-provided reference trajectory $x_{0:T}^\textrm{ref}$ into the latent space to compute the feedforward control $u^\textrm{ref}$ via least-squares, and synthesizing a time-varying LQR controller to track this reference. To account for the discrepancy between the learned Koopman dynamics and the true system, the algorithm generates calibration data from system rollouts and utilizes SNSA-based conformal prediction to derive statistically-valid error bounds $\bar e_t$. In parallel, the nominal KRS, $\overline{\mathcal{X}}_{0:T}$, is computed by constructing a computational graph of the closed-loop tracking dynamics in the lifted Koopman state space and computing reachable sets for the linear Koopman dynamics via \texttt{auto\_LiRPA}. Finally, the algorithm returns a conformalized Koopman reachable set (CKRS), $\mathcal{X}_{0:T}^{1-\delta}$, obtained by inflating the nominal KRS with the derived conformal error bounds via a Minkowski sum. Crucially, the CKRS guarantees coverage of the trajectories generated by the \textit{true system} with probability $1-\delta$.

\begin{algorithm}[]

\caption{Online Conformalized Koopman Reachable Set (CKRS)}
\label{alg:online_ckrs}
\begin{algorithmic}[1]
 \Require Koopman model ($\enc, \dec, \ka, \kb$); true dynamics $f$ (simulator); reference trajectory $x^{\text{ref}}_{0:T}$; initial set $\mathcal{X}_0 = \mathcal{B}_\epsilon(x_0^{\text{ref}})$; RSOA confidence $1-\delta$; calibration sizes $M^{\lambda}, K^{\text{cal}}$; cost matrices $Q, R$; verification tool $\texttt{auto\_LiRPA}$
 \Ensure Conformalized Koopman reachable set $\overline{\mathcal{X}}^{1-\delta}_{0:T}$ (CKRS)
 \Statex \Comment{1. Initialization}
 \State $z^{\text{ref}}_{0:T} \gets \enc(x^{\text{ref}}_{0:T})$ \Comment{encode reference state trajectory}
 \State $u^{\text{ref}}_{0:T-1} \gets$ Equation \eqref{eq:u_ff} \Comment{compute feedforward control sequence}
 \State $\lqrmat_{0:T-1} \gets$ \textrm{LQR}$(K_A, K_B)$ \Comment{compute LQR gains}
 \State $\pi(\cdot) \gets$ Equation \eqref{eq:u_latent_cl} \Comment{compute lifted controller using LQR and feedforward inputs}
 \Statex \Comment{2. Error bounds $\bar e_t$ via CP}
 \State $\mathcal{D}_E, \mathcal{D}_N \gets \texttt{GenerateCalibrationData}(M^{\lambda}, K^{\text{cal}})$
 \State $\bar e_t \gets \texttt{SNSA}(\mathcal{D}_N, \mathcal{D}_E)$
 \Statex \Comment{3. Nominal KRS $\hat{\overline{\mathcal{X}}}_{0:T}$}
 \State $\tnfl(\cdot) \gets \dec(\texttt{PropagateLatent}(\enc(\cdot), z^{\text{ref}}, u^{\text{ref}}, \lqrmat))$ \Comment{define CG}
 \State $\hat{\overline{\mathcal{X}}}_{0:T} \gets \texttt{AutoLiRPA}(\tnfl, \mathcal{X}_0)$ \Comment{compute KRS via CG}
 \State $\overline{\mathcal{X}}^{1-\delta}_{0:T} \gets \hat{\overline{\mathcal{X}}}_{0:T} \oplus \textrm{diag}(\bar e_t) \mathcal{B}_1(0)$ \Comment{4. combine to obtain CKRS, and return}
 \State \textbf{return} $\overline{\mathcal{X}}^{1-\delta}_{0:T}$
\end{algorithmic}
\end{algorithm}

\subsection{Experiment Details and Additional Experiments}\label{app:parameters}

The following section contains the the analytical dynamics we used in this paper (App. \ref{app:dynamics}) and additional experiment results. In particular, we benchmark our approach with a baseline Koopman dynamics learning approach that leverages polynomial liftings (App. \ref{app:polynomial}), a baseline learned reachable set prediction method (App. \ref{app:deepreach}), and a baseline conformal prediction-based reachable set computation method that does not leverage Koopman-based propagation (App. \ref{app:nnv}). We also present additional experiment figures, including full reachable set visualizations on a per-state breakdown for the unicycle system (App. \ref{app:unicycle}), planar quadcopter system (App. \ref{app:quadcopter}), MuJoCo hopper system (App. \ref{app:hopper}), and MuJoCo swimmer system (App. \ref{app:swimmer}).

\subsubsection{Dynamics. }\label{app:dynamics}
For unicycle experiments ($x \in \mathbb{R}^3$, $u \in \mathbb{R}^2$), we used the dynamics model, time-discretized via forward Euler ($dt =0.1$s)
\begin{equation}\label{eq:unicycle}
    \dot{\mathbf{x}}\;=\; f(\mathbf{x},\mathbf{u})=
    \begin{bmatrix}
        u_1 \cos(x_2) \\ u_1 \sin(x_2) \\ u_2
    \end{bmatrix}.
\end{equation}

\noindent For planar quadcopter experiments ($x \in \mathbb{R}^6$, $u \in \mathbb{R}^2$), we used the parameters $m=0.5\textrm{kg}$, $g=-9.81\textrm{m/s}^2$, $I_y=0.01\textrm{kg} \cdot \textrm{m}^2$ for the dynamics model, and time-discretized with $dt = 0.05\,\mathrm{s}$:
\begin{equation}\label{eq:2dquad}
    \dot{\mathbf{x}}\;=\; f(\mathbf{x},\mathbf{u})=\begin{bmatrix}
        x_4 \\ x_5 \\ x_6 \\ -\frac{u_1}{m} \sin(x_3)\\ g+\frac{u_1}{m}\cos(x_3)\\\frac{u_2}{I_y}
    \end{bmatrix}
\end{equation}

\noindent For 3D quadcopter experiments, we used the parameters $m=1$kg, $g=-9.81$ m/s$^2$, $I_{x,y,z}=[0.5,0.1,0.3]$ kg·m$^2$ for the dynamics, with a time-discretization of $dt = 0.025\,\mathrm{s}$:
\begin{equation}
    \dot{\mathbf{x}} \;=\; f(\mathbf{x},\mathbf{u})=
    \begin{bmatrix}
        \dot{x} \\
        \dot{y} \\
        \dot{z} \\
        q \cdot \sin(\phi)/\cos{\theta}+r\cdot \cos{\phi}/\cos{\theta} \\
        q \cdot \cos{\phi} - r\cdot \sin{\phi} \\
        p + q \cdot \sin{\phi}\cdot \tan{\theta} +r\cdot \cos{\phi} \cdot \tan{\theta} \\
        \frac{u_1}{m} \cdot (\sin{\phi} \cdot \sin{\psi} + \cos{\phi} \cdot \cos{\psi} \cdot \sin{\theta}) \\ 
        \frac{u_1}{m} \cdot (\cos{\phi} \cdot \sin{\phi} - \cos{\phi} \cdot \sin{\psi} \cdot \sin{\theta}) \\ 
        g + u_1 \cdot (\cos{\phi} \cdot \cos{\theta}) / m\\
        ((I_y - I_z) / I_x) \cdot q\cdot r + \frac{u_2}{I_x} \\
        ((I_z - I_x) / I_y) \cdot p\cdot r + \frac{u_3}{I_y} \\
        ((I_x - I_y) / I_z) \cdot p\cdot q + \frac{u_4}{I_z} \\
    \end{bmatrix}
    \label{eq:3D_Quad_dyn}
\end{equation}

\subsubsection{Comparison between neural Koopman liftings and polynomial liftings}\label{app:polynomial}
We compare root mean square error (RMSE) between a polynomial-lifted extended dynamic mode decomposition baseline and our neural lifting approach on the Hopper system. For the polynomial lift, we use PyKoopman \citep{pykoopman} with degree-3 monomials and identical training data/preprocessing. During open-loop evaluation, the polynomial model becomes numerically unstable, that is, its predictions grow without bound and diverge by the eighth timestep, so we report RMSE only over the first 5 prediction steps in Table \ref{tab:hopper_11d_rmse}. In contrast, our neural lifting yields a stable latent dynamics model that tracks accurately across three complete hops, enabling longer-horizon evaluation. These results highlight that for certain systems, e.g., those with strongly nonlinear, contact-rich dynamics (like the MuJoCo hopper), fixed polynomial bases are insufficient, whereas a neural network lifting function captures the necessary structure and remains stable over extended rollouts. Together, these results motivate the development of a Koopman reachable-set computation framework that incorporates neural networks in the loop, which we enable using \texttt{auto\_LiRPA}.

\begin{table}[ht]
\centering
\begin{tabular}{lrr}
\hline
\textbf{State} & \textbf{Polynomial Lifting} & \textbf{Ours} \\
\hline
z-coord (height)   & 2728.481591        & 0.009848 \\
torso angle        & 117.374908         & 0.004201 \\
thigh joint        & 2832.025934        & 0.013581 \\
leg joint          & 10791.162550       & 0.015690 \\
foot joint         & 8063.167273        & 0.017901 \\
vel x-coord        & 321013.541866      & 0.047526 \\
vel z-coord        & 1691803.060196     & 0.059763 \\
vel torso          & 4512148.680118     & 0.082879 \\
vel thigh          & 596074.452102      & 0.073777 \\
vel leg            & 6164730.041985     & 0.164258 \\
vel foot           & 135554800.971296   & 0.259034 \\
\hline
\end{tabular}
\caption{RMSE comparison for the hopper (11D) system on rollouts performed by the learned model and the polynomial fitted model.}
\label{tab:hopper_11d_rmse}
\end{table}

\subsubsection{Baseline comparison with \cite{deepreach}.}\label{app:deepreach}

Figure~\ref{fig:deepreach} offers a qualitative comparison of safety violations between the baseline DeepReach method \citep{deepreach} and ScaRe-Kro. While the main text statistics indicate a lower safety rate for the baseline, this visualization explicitly captures the ``reach-avoid" failure mode: the orange trajectories generated by the DeepReach value function clearly drift into the red obstacle region. This highlights the risks of relying on approximate neural PDE solutions for safety-critical constraints. In contrast, the reachable set (shown in blue) computed by our method creates a tight, verified corridor that contains the ground truth rollouts with high, calibrated probability, visually confirming that the controller successfully steers the dynamics clear of the hazard while adhering to the probabilistic bounds.

\begin{figure}[ht]
  \centering
  \includegraphics[width=1\linewidth]{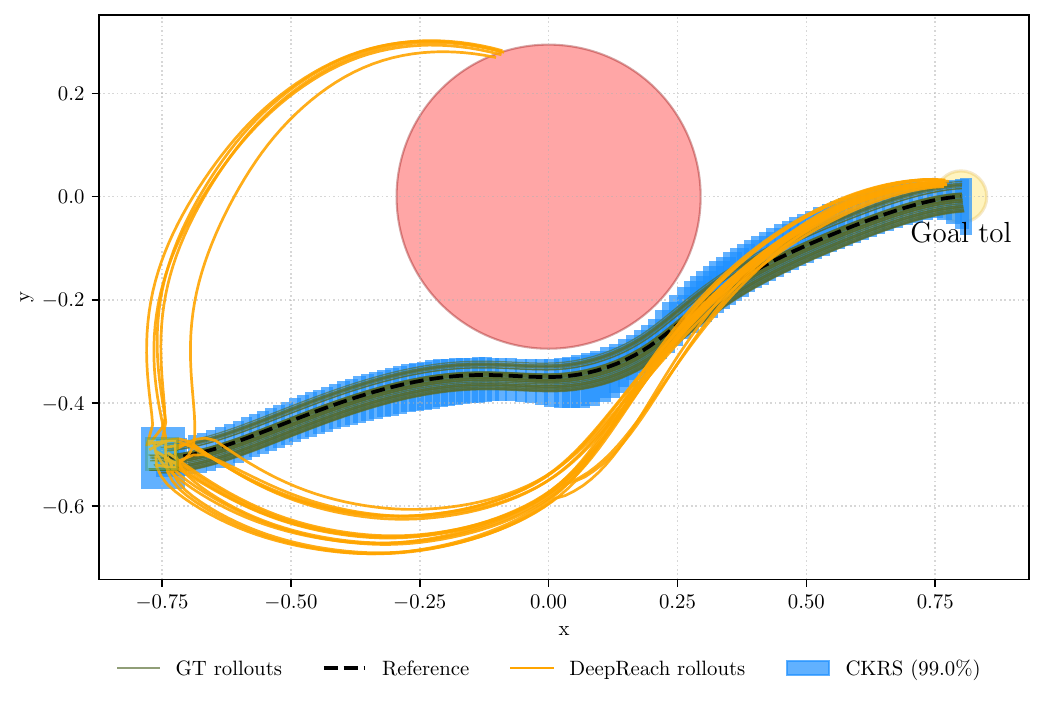}
  \caption{Closed-loop trajectories obtained from DeepReach-based controller (orange) can fail to enforce the reach-avoid constraint (reach yellow goal region while avoiding red unsafe set), while our controller (green rollouts) guarantees safety and containment within the blue CKRS with probability 0.99.}
  \label{fig:deepreach}
\end{figure}

\subsubsection{Baseline comparison with \cite{cp_baseline}.}\label{app:nnv}
Figure~\ref{fig:cpnnv} provides a state-by-state breakdown of the conservativeness gap between the reachable sets computed via the baseline CP-NNV \citep{cp_baseline} approach (yellow) and our proposed method (blue). While the tabular results in Section~\ref{sec:experiments} quantify the volume difference, Figure \ref{fig:cpnnv} illustrate where that difference originates visually. Overall, one of the key differences between our approach and \cite{cp_baseline} is in using a learned multi-step predictor instead of learned Koopman dynamics to compute reachable sets. We note that the CP-NNV method suffers from more conservative reachable sets, particularly in the angular velocity ($p, q, r$) and angle ($\phi, \theta, \psi$) dimensions. We attribute this to the difficulty of computing reachable sets accurately with a learned multi-step predictor over long horizons. Conversely, ScaRe-Kro maintains consistently tight bounds across all 12 dimensions, demonstrating that the Koopman linearization effectively mitigates the ``wrapping effect" that causes over-approximation in standard neural network verification.

\begin{figure}[ht]
    \centering
    \includegraphics[width=1\linewidth]{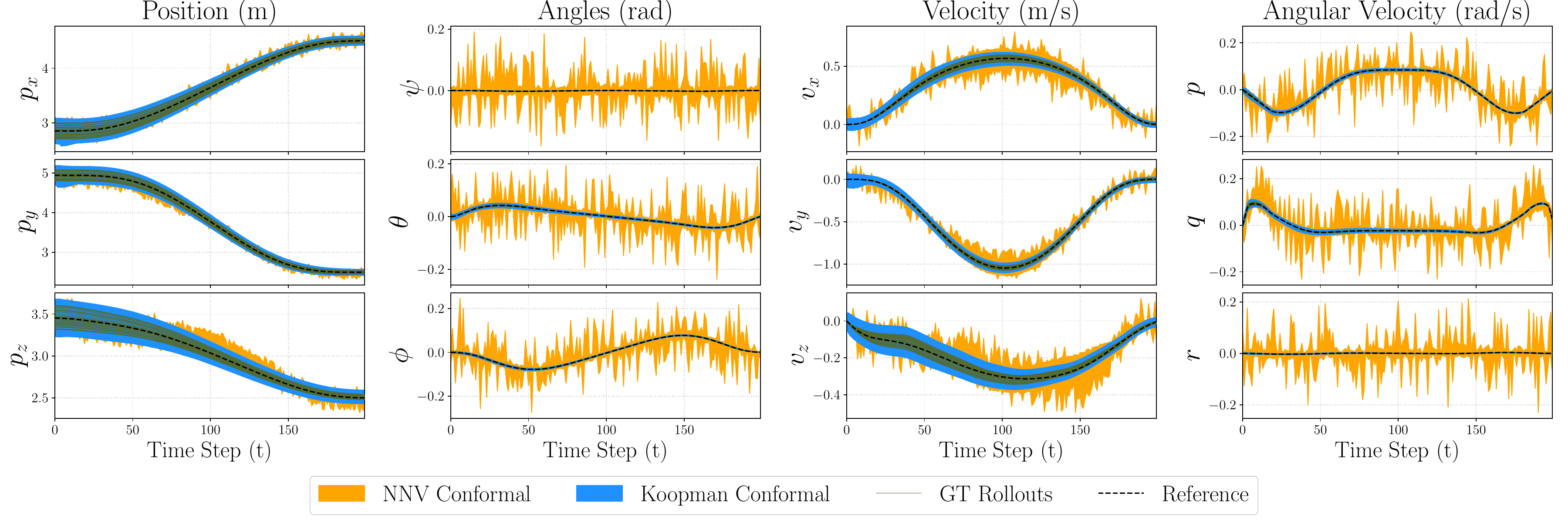}
    \caption{Baseline comparison between CP NNV approach and our method}
    \label{fig:cpnnv}
\end{figure}

\subsubsection{Additional results on unicycle system.}\label{app:unicycle}
Figure~\ref{fig:car} depicts the spatial evolution of the CKRS for the unicycle. Beyond the numerical coverage rates of Sec. \ref{sec:experiments}, this plot serves as a geometric sanity check, showing the reachable set's ability to contain the reference trajectory (black dashed line) and closed-loop trajectories (green) while navigating tight constraints. The blue tube is the CKRS that represents the region where the system is guaranteed to remain with with probability at least $1-\delta$, illustrating that even with the reachable set inflation induced by the conformalization step, the resulting set remains sufficiently compact to verify safety despite the closeness of the closed-loop rollouts to the red obstacle.

\begin{figure}[ht]
  \centering
  \includegraphics[width=1\linewidth]{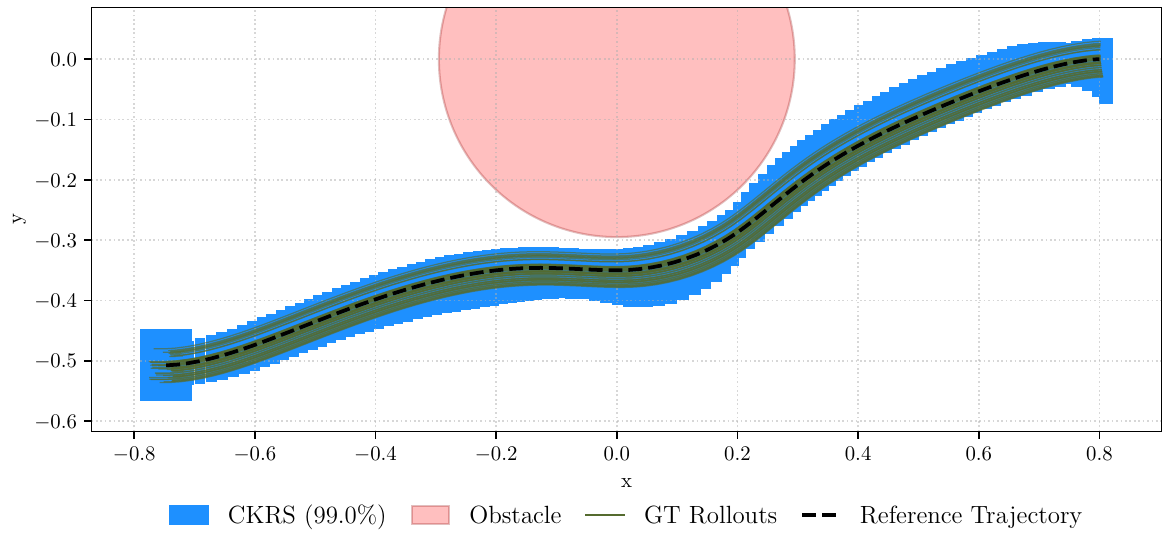}
  \caption{CKRS computed for 3D unicycle model, enabling safety verification despite the small distance of the closed-loop trajectories to the obstacle (red).}
  \label{fig:car}
\end{figure}

\subsubsection{Additional results on planar quadcopter system.}\label{app:quadcopter}
We present the full dimensional trajectory evolution for the 6D planar quadcopter in Figure~\ref{fig:2d_quad}. These plots isolate the performance of the Koopman tracking controller on an underactuated system. Notably, the reachable sets (light blue) and their conformalized counterparts (dark blue) accurately capture the behavior in all state dimensions, bounding the closed-loop trajectories. This confirms that the learned linear structure in the lifted Koopman space preserves the critical coupling effects of the original nonlinear dynamics, while the conformal bounds effectively absorb the residual error in the predicted KRS. We also note that the conformalization of the KRS in this example has a relatively small impact on the overall volume of the reachable set estimates, though we note that this is not always the case, e.g., the hopper system shown in Fig.~\ref{fig:hopper_exp}. Moreover, this experiment showcases the offline CP inflation bounds (described in Sec. \ref{sec:conformalize_KRS}) generalize across multiple reference trajectories, we compute the CKRS for four reference trajectories total (Fig. \ref{fig:2d_quad}a-d) and show that the resulting closed-loop trajectories remain within the computed set. This indicates that a single offline CP calibration can be reused to efficiently inflate reachable sets computed online.

\begin{figure}[ht]
    \centering
    \includegraphics[width=1\linewidth]{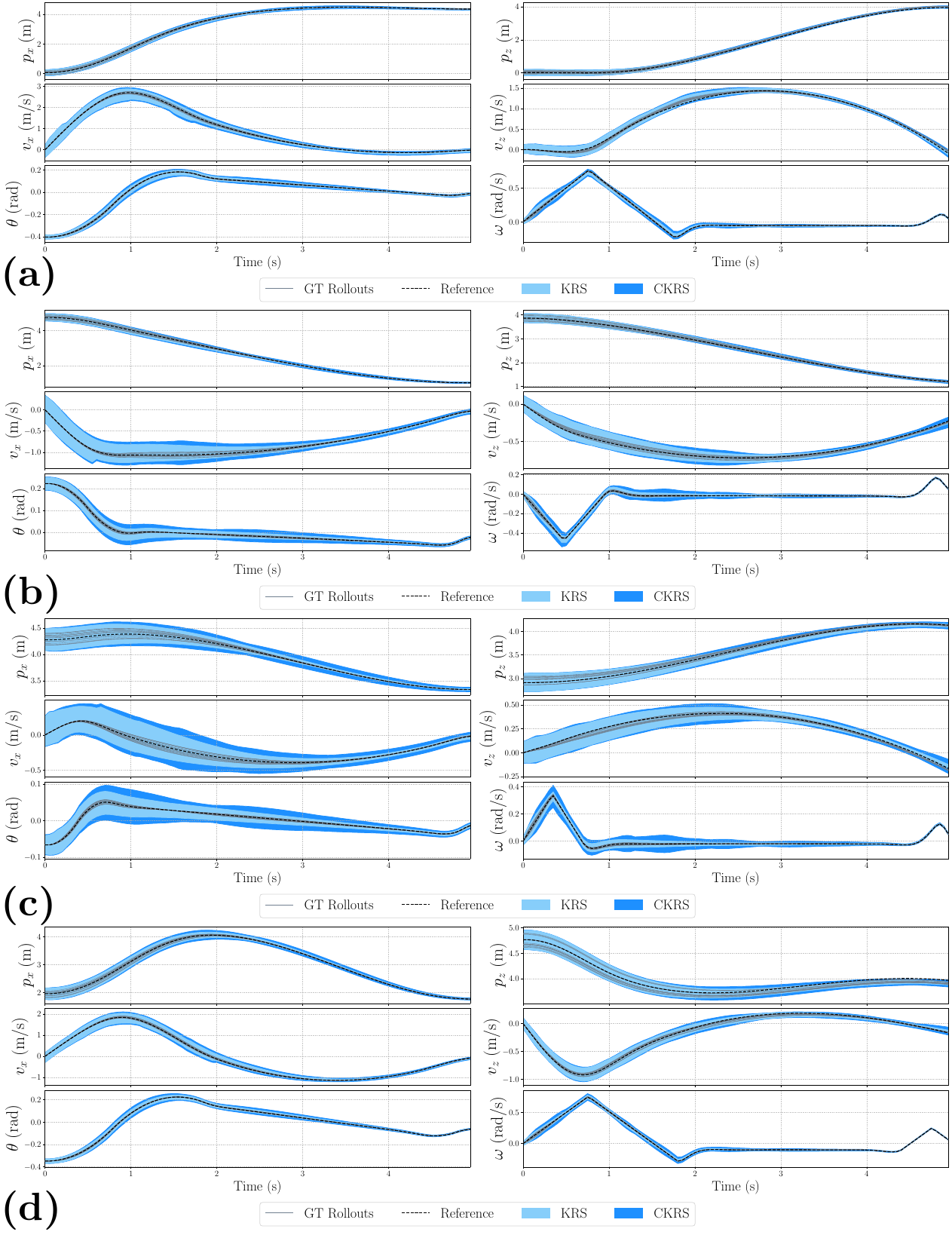}
    \caption{CKRS computed for four reference trajectories on a 2D quadcopter model, with offline-calibrated CP bounds.}
    \label{fig:2d_quad}
\end{figure}

\subsubsection{Additional results on 3D quadrotor system.}\label{app:3d_quad}

To verify that the offline CP inflation bounds (described in Sec. \ref{sec:conformalize_KRS}) generalize across multiple reference trajectories, we compute the CKRS for three alternative reference trajectories and show that the resulting closed-loop trajectories remain within the computed set (Fig. \ref{fig:3d_quad_alt}). This indicates that a single offline CP calibration can be reused to efficiently inflate reachable sets computed online. We also visualize two of these CKRSs together with the obstacle that the quadrotor is navigating around (Fig. \ref{fig:3d_quad_obstacle}), showing that since the RSOAs do not intersect with the obstacle, the closed-loop system is guaranteed to avoid collision when tracking either reference.
\begin{figure}[ht]
    \centering
    \includegraphics[width=1\linewidth]{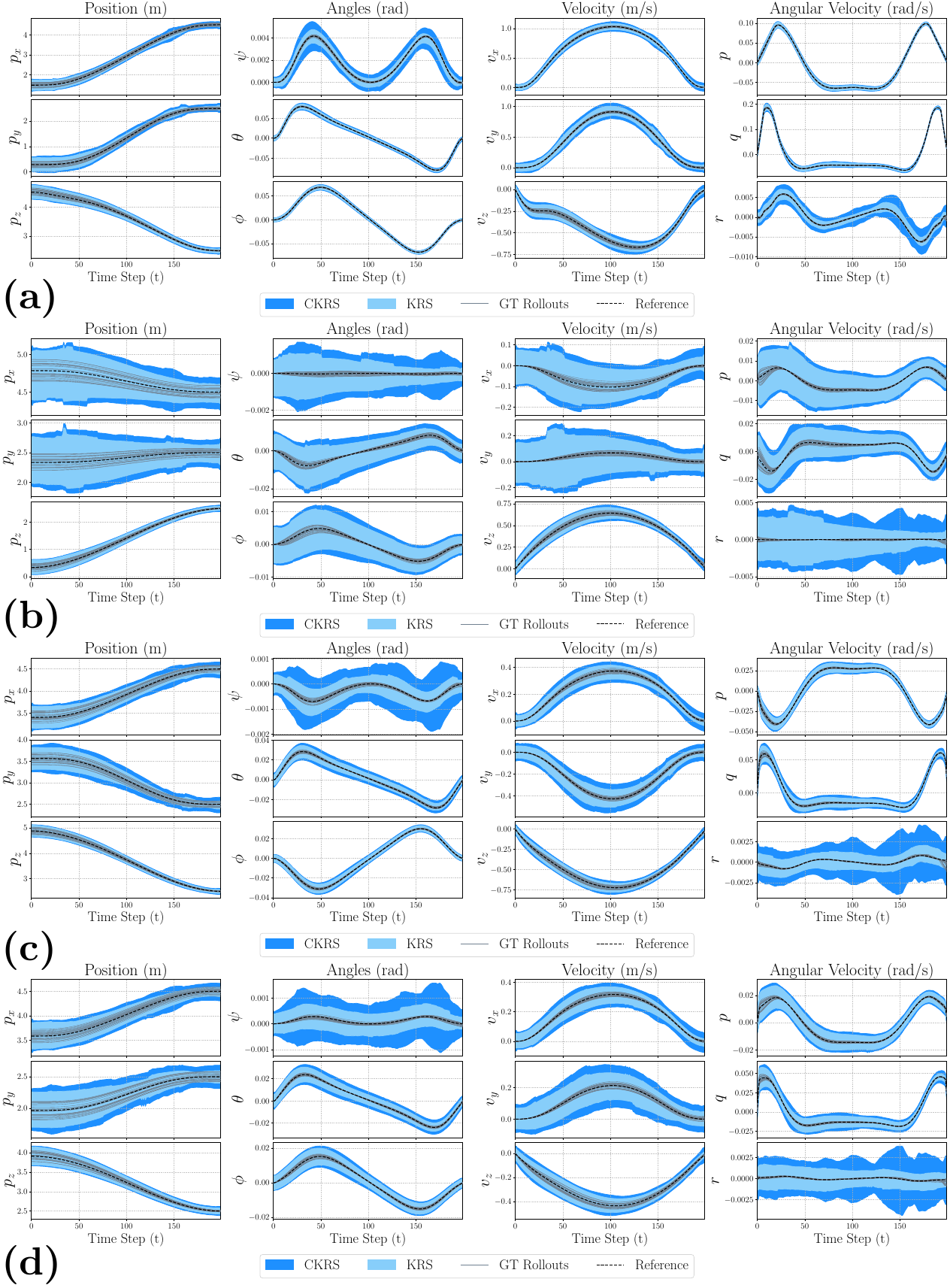}
    \caption{Alternative CKRSs computed for 3D quadcopter model for a different reference trajectory.}
    \label{fig:3d_quad_alt}
\end{figure}
\begin{figure}[ht]
    \centering
    \includegraphics[width=1\linewidth]{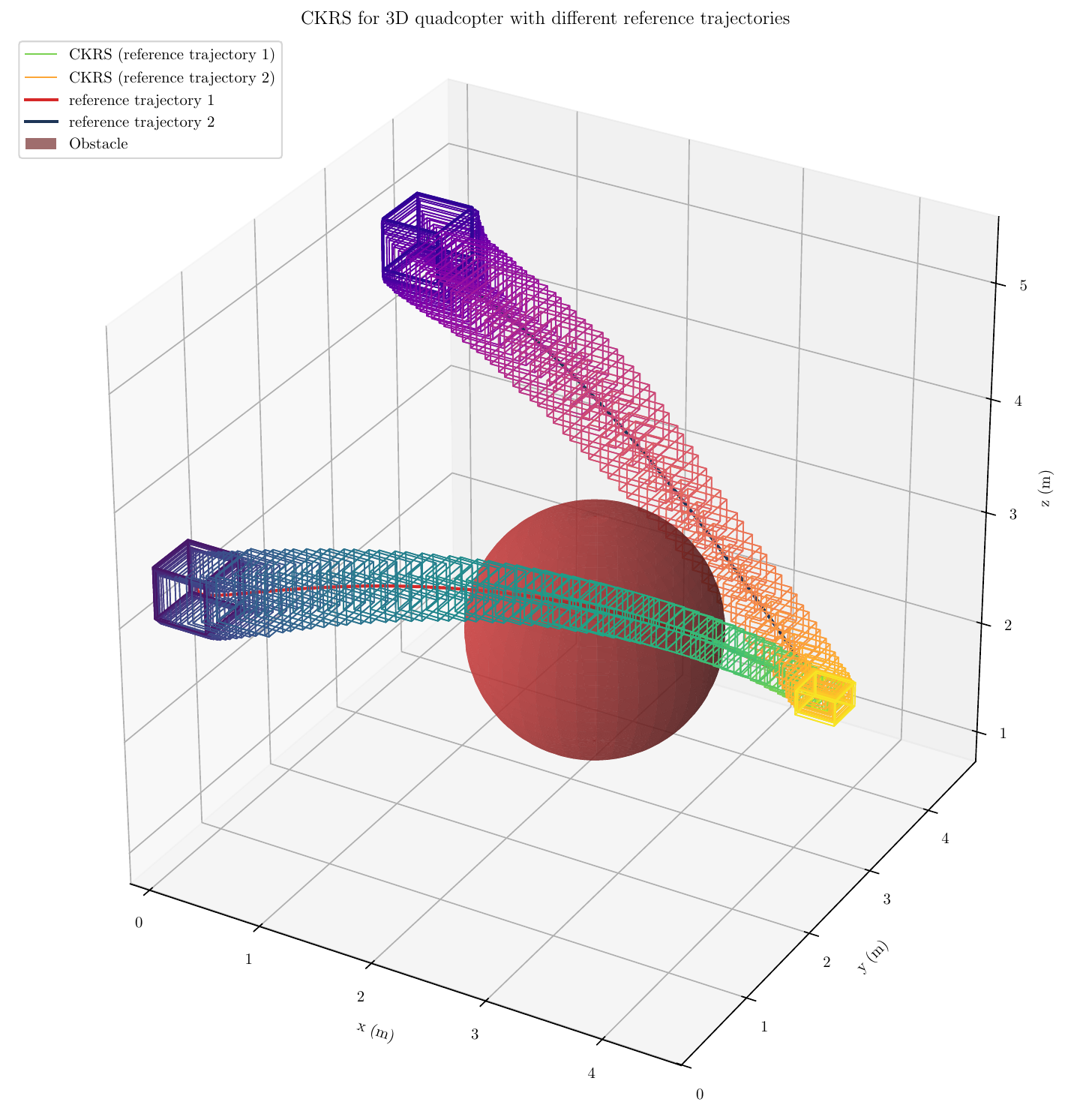}
    \caption{CKRSs computed for a 3D quadcopter model (position components), overlaid with obstacle.}
    \label{fig:3d_quad_obstacle}
\end{figure}

\subsubsection{Additional results on MuJoCo hopper system.}\label{app:hopper}
Figure~\ref{fig:hopper_exp} illustrates the full state-space evolution of the 11-dimensional MuJoCo Hopper \citep{todorov2012mujoco} system during a forward hopping maneuver. This system poses a unique challenge due to its hybrid dynamics, characterized by non-smooth transitions during ground impacts. The periodic stability of the gait is clearly visible in the ``height z" and ``torso angle" trajectories, where the ScaRe-Kro framework successfully captures the complex dynamics associated with the flight and stance phases. Notably, while the position states exhibit tight confinement around the reference trajectory (dashed black line), the velocity dimensions (e.g., ``Velocity Foot") display a necessary inflation of the conformal bounds. This expansion correctly accounts for the instantaneous jumps in state caused by contact forces and modeling uncertainties at impact, yet the ground truth rollouts remain strictly contained within the predicted tube, validating the robustness of the Koopman tracking controller under discontinuous contact dynamics.

\begin{figure}[ht]
    \centering
    \includegraphics[width=1\linewidth]{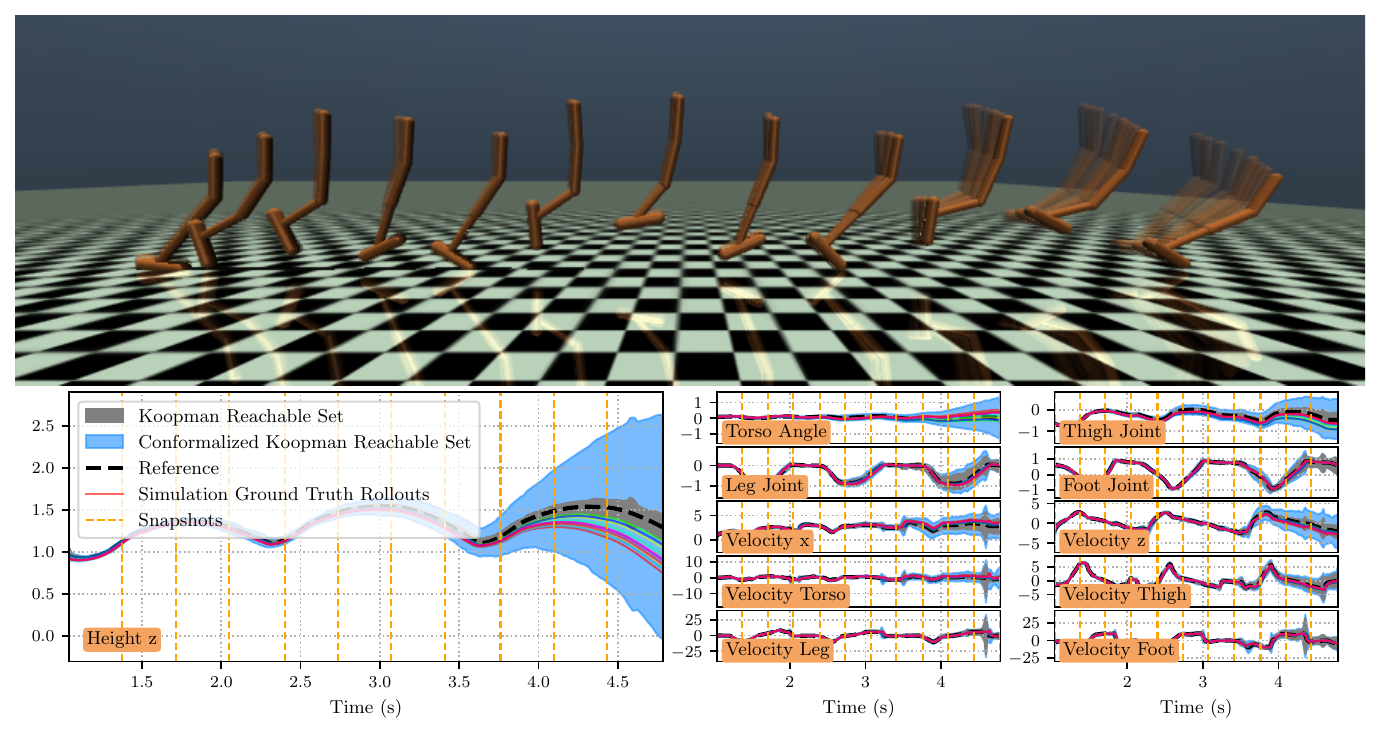}
    \caption{CKRS and KRS computed for 11-D MuJoCo Hopper. The thin multi-colored lines denote distinct closed-loop trajectory rollouts.}
    \label{fig:hopper_exp}
\end{figure}

\subsubsection{Additional results on MuJoCo swimmer system.}\label{app:swimmer}
Figure~\ref{fig:swimmer} presents the component-wise reachability analysis for the 28-dimensional MuJoCo Swimmer \cite{todorov2012mujoco}, representing the most high-dimensional validation of our framework. The plots display the highly coupled, undulatory gait required for locomotion, evidenced by the phase-shifted oscillations across the chain of link joints and velocities. Despite the significant increase in state dimensionality, the CKRS maintain consistent coverage of the closed-loop trajectories across all 28 dimensions. The uniform tightness of the bounds across the swimmer link segments confirms that the learned globally-linear dynamics structure effectively encodes the multi-body dynamics, enabling safe, long-horizon verification for complex, articulated robots while maintaining tractable computation.

\begin{figure}[ht]
    \centering
    \includegraphics[width=1\linewidth]{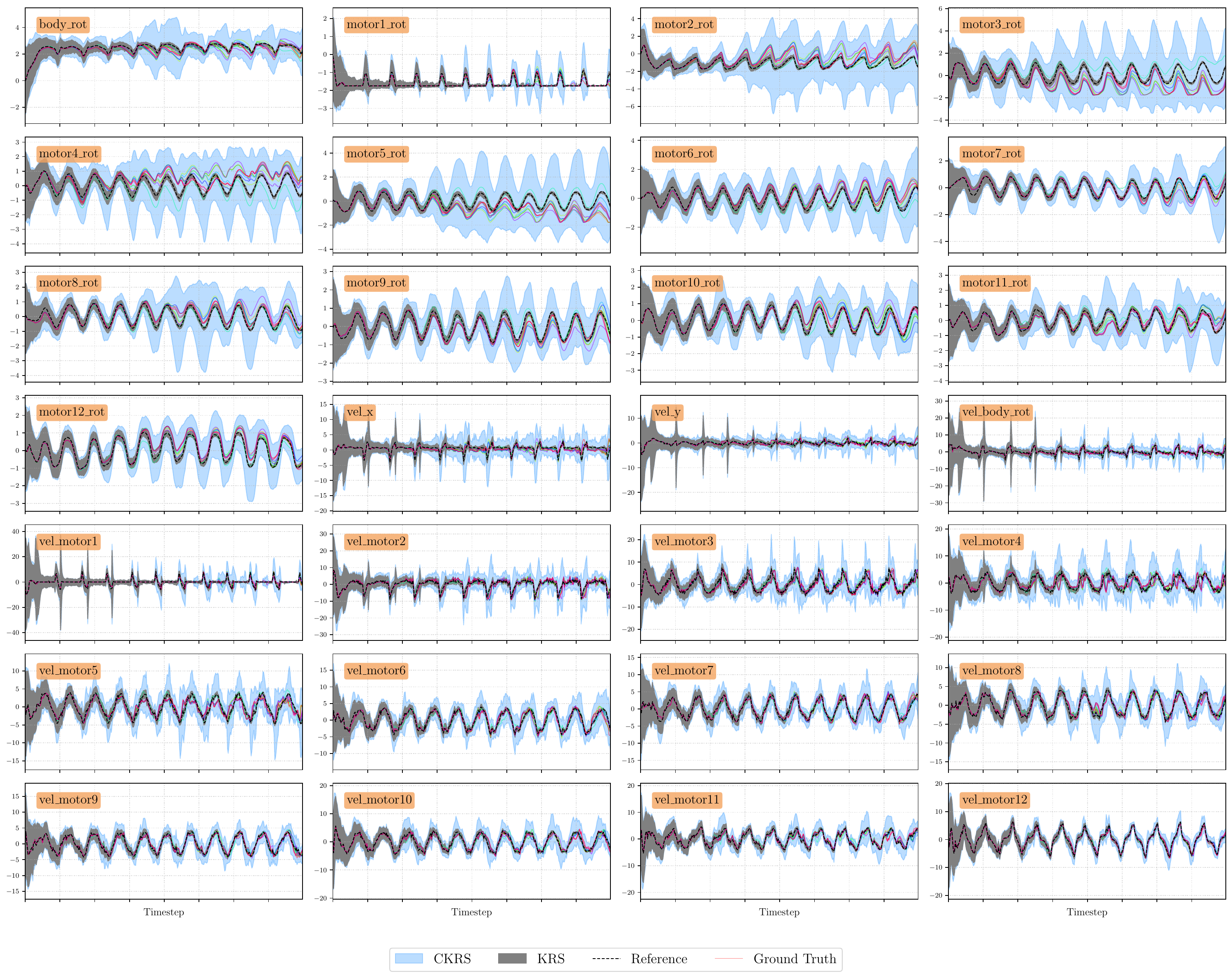}
    \caption{CKRS and KRS computed for 28-D MuJoCo Swimmer. The thin multi-colored lines denote distinct closed-loop trajectory rollouts.}
    \label{fig:swimmer}
\end{figure}
\end{document}